%\documentclass[journal]{IEEEtran}
%\documentclass[9pt,double]{IEEEtran}
%\documentclass[a4paper,12pt]{article}
%\documentclass[conference, 10pt]{IEEEtran}

%%%%%%%%%%%%%%%%% two columns %%%%%%%%%%%%%%%%%%%%%%%%%%%
\documentclass[journal]{IEEEtran}
%%%%%%%%%%%%%%%%% one column %%%%%%%%%%%%%%%%%%%%%%%%%%%%
%\documentclass[12pt,draftclsnofoot, onecolumn]{IEEEtran}
%\linespread{1.56}
%%%%%%%%%%%%%%%%%%%%%%%%%%%%%%%%%%%%%%%%%%%%%%%%%%%%%%%%% 

\usepackage{epsfig,amsmath,amssymb,epsf,amsthm,scalefnt,multirow,subfig}
\usepackage{xcolor}
\usepackage{float}
\usepackage{cite}
\usepackage{psfrag}
\usepackage{gensymb}
\usepackage{cleveref}
\usepackage{mathrsfs}
\usepackage{mathtools}
\usepackage{algpseudocode}
\usepackage{algorithm}
\usepackage{enumerate}
\usepackage{stfloats}
\usepackage{latexsym}
\usepackage{multicol}
\usepackage{pgfplots} 	% 막대 그래프 그리는데 필요해
\usepackage{tikz}		% 뭔가 이것도 필요할지도 몰라

\newcommand{\argmax}{\mathop{\mathrm{argmax}}}

\newtheorem{lemma}{Lemma}
\newtheorem*{lemma*}{Lemma}

% Calligraphic uppercase
  \def\cC{{\mathcal{C}}}

\def\cM{{\mathcal{M}}} \def\cN{{\mathcal{N}}}

\def\argmax{\mathop{\mathrm{argmax}}}
\def\diag{\mathop{\mathrm{diag}}}

\def\b0{{\pmb{0}}} 

% Bold
\def\ba{{\mathbf{a}}}   
 \def\bff{{\mathbf{f}}}  \def\bh{{\mathbf{h}}}
   
 \def\bn{{\mathbf{n}}}  \def\bp{{\mathbf{p}}}
\def\bq{{\mathbf{q}}}   
\def\bu{{\mathbf{u}}} \def\bv{{\mathbf{v}}} \def\bw{{\mathbf{w}}} \def\bx{{\mathbf{x}}}
\def\by{{\mathbf{y}}}   

\def\bA{{\mathbf{A}}} \def\bB{{\mathbf{B}}} \def\bC{{\mathbf{C}}} \def\bD{{\mathbf{D}}}
 \def\bF{{\mathbf{F}}}  \def\bH{{\mathbf{H}}}
\def\bI{{\mathbf{I}}}   
 \def\bN{{\mathbf{N}}}  
\def\bQ{{\mathbf{Q}}} \def\bR{{\mathbf{R}}}  
\def\bU{{\mathbf{U}}} \def\bV{{\mathbf{V}}} \def\bW{{\mathbf{W}}} 
\def\bY{{\mathbf{Y}}}

\def\red{\textcolor{black}}

\def\beginbreak{\begingroup\allowdisplaybreaks}

%{{\mbox{\rm $\scriptscriptstyle ^\mid$\hspace{-0.40em}C}}} %o
%{{\mbox{\rm $\scriptscriptstyle ^\mid$\hspace{-0.40em}C}}} %o

% Block for Real, Complex

% correct bad hyphenation here
\hyphenation{op-tical net-works semi-conduc-tor}

\begin{document}
	
\title{Practical Channel Estimation and Phase Shift Design for Intelligent Reflecting Surface Empowered MIMO Systems}

\author{Sucheol~Kim,
	Hyeongtaek~Lee,
	Jihoon~Cha,
	Sung-Jin~Kim,
	Jaeyong~Park,
	and~Junil~Choi
	%        ~\IEEEmembership{Member,~IEEE.}% <-this % stops a space
	
	\thanks{Sucheol~Kim, Hyeongtaek~Lee, Jihoon~Cha, and Junil~Choi are with the School of Electrical Engineering, Korea Advanced Institute of Science and Technology (e-mail: \{loehcusmik; htlee8459; charge; junil\}@kaist.ac.kr). Sucheol~Kim, Hyeongtaek~Lee, and Jihoon~Cha contributed equally to this work.}
	\thanks{Sung-Jin~Kim and Jaeyong~Park are with C\&M Standard Lab, Future Technology Center, LG Electronics Inc. (e-mail: \{sj88.kim; jaeyong630.park\}@lge.com).}
	
}

% The paper headers
%\markboth{Journal of \LaTeX\ Class Files,~Vol.~14, No.~8, August~2015}%
%{Shell \MakeLowercase{\textit{et al.}}: Bare Demo of IEEEtran.cls for IEEE Journals}

% make the title area
\maketitle

\begin{abstract}
	In this paper, channel estimation techniques and phase shift design for intelligent reflecting surface (IRS)-empowered single-user multiple-input multiple-output (SU-MIMO) systems are proposed. Among four channel estimation techniques developed in the paper, the two novel ones, single-path approximated channel (SPAC) and selective emphasis on rank-one matrices (SEROM), have low training overhead to enable practical IRS-empowered SU-MIMO systems. SPAC is mainly based on parameter estimation by approximating IRS-related channels as dominant single-path channels. SEROM exploits IRS phase shifts as well as training signals for channel estimation and easily adjusts its training overhead. A closed-form solution for IRS phase shift design is also developed to maximize spectral efficiency where the solution only requires basic linear operations. Numerical results show that SPAC and SEROM combined with the proposed IRS phase shift design achieve high spectral efficiency even with low training overhead compared to existing methods.
\end{abstract}
% Note that keywords are not normally used for peerreview papers.
\begin{IEEEkeywords}
	Intelligent reflecting surface (IRS), channel estimation, training overhead, phase shift design, spectral efficiency, single-user multiple-input multiple-output (SU-MIMO).
\end{IEEEkeywords}

\section{Introduction} \label{Introduction}

%\IEEEPARstart{M}{ultiple}-input multiple-output (MIMO) systems achieve high spectral efficiency with their capability of exploiting array gain, beamforming gain, and increased degree-of-freedom through spatial multiplexing \cite{MIMO_spectral_efficiency_1,MIMO_spectral_efficiency_2}. Especially, massive MIMO has become one of the leading technology of the fifth-generation (5G) network and future wireless communication systems realizing much higher spectral efficiency by exploiting a huge number of antennas \cite{5G,massive_MIMO_1,massive_MIMO_2}. However, the following large increase of energy consumption proportional to the number of antennas and active elements, e.g., radio frequency (RF) chains and power amplifiers, is inevitable for massive MIMO~systems.

\IEEEPARstart{I}{ntelligent} reflecting surface (IRS) is drawing great interest in recent years as a way to tackle the energy consumption problem of future wireless communication systems \cite{IRS_overview_1,IRS_overview_2,IRS_overview_3,IRS_overview_4,IRS_overview_5}. The IRS is a 2D surface consisting of low-cost passive scattering elements that can be deployed in an energy-efficient way and can present the benefit of array and beamforming gain as multiple antennas do in the multiple-input multiple-output (MIMO) systems. While there are no active elements in general, the IRS can manage magnitude (by turning on/off the passive elements) and phase shift of the incoming signals in order to strengthen the reflected signals achieving high spectral efficiency and overcoming large path-loss due to blockage \cite{IRS_overview_1},~\cite{IRS_overview_2}.

To fully exploit the advantages of the IRS-empowered communication systems, acquiring proper channel information on the IRS-related channels is essential at the base station (BS) or user equipment (UE). This is difficult in general since the IRS is not capable of transmitting or receiving training signals \cite{IRS_estimation_1,IRS_estimation_3}.
%
%
%Although there are several variations that adopt a few active elements at the IRS \cite{IRS_with_active_elements_1,IRS_with_active_elements_2,IRS_with_active_elements_3}, the performance gain from using the active elements is not analyzed yet, and deploying them can prevent the IRS from getting the benefit of energy efficiency that was possible by deploying only the passive elements.
%
% 앞에서 active 이야기 나중에 추가하면 이 부분도 교체
%Several works were conducted to estimate the IRS-related channels with the passive IRS elements.
Several works were conducted to estimate the IRS-related channels. 
In \cite{T.L.Jensen:2020-MISO_CLRB}, an estimation technique was developed to minimize the Cram$\acute{\text{e}}$r-Rao lower bound of IRS-related channel estimation.
%In the problem formulation, the IRS elements are constrained to have the unit or zero magnitude, which mitigates a large error by a burst of noise. 
A least squares approach was adopted in \cite{D.Mishra:2019-MISO_estimation_power_transmit}, and the estimation error was analyzed with regard to the error offset on the IRS setting caused by the imperfect implementation. In \cite{C.You:2020-SISO_channel_estimation}, channel estimation under finite bit phase quantization of IRS elements was examined to afford a large number of IRS elements. These estimation techniques, though, are limited to multiple-input single-output or single-input single-output systems.

% 앞 문단을 어찌 합치고 나누는가에 따라 이 부분 연결
To realize high spectral efficiency that comes from spatial multiplexing, channel estimation for the IRS-empowered MIMO systems is necessary. In \cite{Z.He:020-SU_MIMO_two_stage}, the sparsity of MIMO channels was assumed, and the sparse matrix factorization and matrix completion were alternately repeated to construct estimated channels. The channel estimation in \cite{G.T.deAraujo:2020-LSKRF} utilized parallel factorization by reformulating the concatenation of received signals.
%, and deep learning is used in \cite{IRS_with_active _elements_2} to decompose noised channel information into angular and frequency domains of channel. % active element 포함해서 제거
In \cite{Q.Nadeem:2020-MMSE-DFT}, minimum mean squared error (MMSE) estimation was developed based on the Rayleigh channel structure.
%The above techniques for either MISO or MIMO systems, however, did not consider the overhead of training sequence length.
The above techniques, however, did not consider channel training overhead. 
%whether they are for MISO or MIMO systems.
%
The required training sequence length of the IRS-empowered communication systems could be much larger than the systems without the IRS due to a huge number of IRS elements \cite{Z.Wang:2020-low_overhead_estimation,IRS_estimation_2}.
%
%For the typical wireless communication systems without the IRS, the length of training sequences grows in proportion to the number of transmit and receive antennas but would be in a moderate range. When the systems contain the IRS, on the contrary, the required training sequence length escalates by additional multiplication with a huge number of IRS elements \cite{Z.Wang:2020-low_overhead_estimation,IRS_estimation_2}. In finite coherence time, tens or hundreds of times increase of the sequence length would not be acceptable. 

%Another concern of using the IRS is to design the IRS elements. In the typical MIMO systems, the transmit and receive beamformers are simply multiplied with a channel matrix, and we can analyze the influence of beamformers and design them for specific purposes. In the IRS-empowered systems, however, the additional equipment, IRS, is associated with two channel matrices, one between the transmitter and the IRS and the other between the IRS and the receiver. As the IRS term is located in middle of the two channel matrices, it is difficult to analyze the effect of IRS element design on the resulting cascaded channel. 
Another important issue of using the IRS is to properly set the phases of IRS elements.
%	In an IRS-empowered MISO system, fortunately, one of channel matrices becomes a channel vector, and the IRS term can be decomposed from the two channels by a reformulation. For the design of IRS elements, then, a relatively simple approach can be adopted for the constructive interference of reflected signals \cite{Q.Wu:2019-MISO_beamforming,Y.Tang:2020-MISO_beamforming,M.Zhao:20201-MISO_beamforming}. In an IRS-empowered MIMO system, on the contrary, two channel matrices are multiplied on both side of the IRS term, and this complicates a decomposition or an analysis. 
%To design the IRS elements for MIMO systems, a 
A min-rate maximization problem was formulated in \cite{Q.Nadeem:2020-MMSE-DFT}, and an iterative technique was proposed to obtain a sub-optimal solution. The IRS element design algorithm in \cite{IRS_overview_4} was developed to solve the proposed capacity characterization~problem. Though, the IRS element design in \cite{Q.Nadeem:2020-MMSE-DFT} is hard to be applied for data rate maximization, and the design in \cite{IRS_overview_4} does not guarantee its performance for imperfect channel information that is obtained by practical channel estimators.

In this paper, we configure a realistic IRS-empowered single-user MIMO (SU-MIMO) system. Considering the passive operation of IRS, we first express the cascaded channel through the IRS not in terms of two separate IRS-related channels, i.e., the UE-IRS link and IRS-BS link, but as a weighted sum of rank-one matrices. Based on the representation, we handle the two practical issues of this system: cascaded UE-IRS-BS channel estimation and IRS phase shift~design.

We develop four estimation techniques for the cascaded channel through the IRS. The first two techniques work as baselines while the last two techniques, single-path approximated channel (SPAC) and selective emphasis on rank-one matrices (SEROM), are novel with low training overhead. SPAC is developed by approximating the UE-IRS and IRS-BS links into dominant single-path channels. The BS estimates effective channel parameters to reconstruct the cascaded UE-IRS-BS channel, which largely reduces the training overhead compared to full channel matrix estimation. SEROM efficiently estimates the cascaded UE-IRS-BS channel by designing IRS reflection-coefficient matrices for training, which enables SEROM to easily adjusts its training overhead. Low-complexity IRS phase shift design with a closed-form solution is also proposed to maximize spectral efficiency.

We verify through simulations that SPAC and SEROM combined with the proposed IRS phase shift design achieve high spectral efficiency even with low training overhead. SPAC is specialized for the situation where the channel consists of a few dominant paths and is not affected from quantization of IRS phase shifts. SEROM achieves high spectral efficiency when the number of IRS elements is large because it can benefit from designing IRS reflection-coefficient matrices for training. The performance of proposed IRS phase shift design is comparable to that of exhaustive search with low computation complexity in the IRS-empowered SU-MIMO system. In terms of effective data transmissions, it is shown that the BS does not have to know the information of IRS-related channels separately to achieve high spectral efficiency.

The paper is organized as follows. In Section~\ref{System Model}, we explain the system model of SU-MIMO with the IRS. The estimation techniques for cascaded UE-IRS-BS channel are developed in Section~\ref{Channel Estimation Techniques}, and the low-complexity IRS phase shift design is proposed in Section~\ref{IRS Beamforming Techniques}. After presenting numerical results for channel estimation and IRS phase shift design in Section~\ref{numerical results}, we conclude the paper in Section~\ref{conclusion}.

% % % % % % % % % % % % % % %
% % % % % % % % % % % % % % %
% % % % % % % % % % % % % % %

\textit{Notations:} We use lower and upper boldface letters to represent column vectors and matrices. The element-wise conjugate, transpose, and conjugate transpose of a matrix $\bA$ are denoted by $\bA^\mathrm{*}$, $\bA^\mathrm{T}$, and $\bA^\mathrm{H}$, respectively. For a square matrix $\bA$, $\det(\bA)$, $\mathrm{Tr}(\bA)$, and $\bA^{-1}$ are the determinant, trace, and inverse of~$\bA$. $\bA(:,m:n)$ implies the submatrix that consists of the $m$-th column to the $n$-th column of the matrix~$\bA$, and the $m$-th element of a vector $\ba$ is denoted by $[\ba]_m$. $\angle(\ba)$ stands for the vector whose elements are phases of each element of a vector $\ba$. The diagonal matrix with the entries of a vector $\ba$ on its main diagonal is expressed as $\diag(\ba)$. The Kronecker product is denoted by $\otimes$, and $\odot$ implies the Hadamard product. $\boldsymbol{0}_m$ and $\boldsymbol{1}_m$ represent the $m \times 1$ all-zero vector and all-one vector, and $\bI_m$ represents the $m \times m$ identity matrix. $\cC\cN(\boldsymbol{\mu},\bQ)$ is used for the circularly symmetric complex Gaussian distribution with mean vector $\boldsymbol{\mu}$ and covariance matrix $\bQ$. Notations $\vert a \vert$ and $\mathrm{Re}(a)$ stand for the magnitude and real part of a complex number $a$. $\left\lceil a \right\rceil$ represents the minimum integer that is not smaller than a real number $a$. $\Vert\ba\Vert$ is the $\ell_2$-norm of a vector $\ba$, and $\Vert\bA\Vert_\mathrm{F}$ is the Frobenius-norm of a matrix $\bA$.

% % % % % % % % % % % % % % %
% % % % % % % % % % % % % % %
% % % % % % % % % % % % % % %

\begin{figure}[t]
	\centering
	\includegraphics[width=0.45\textwidth]{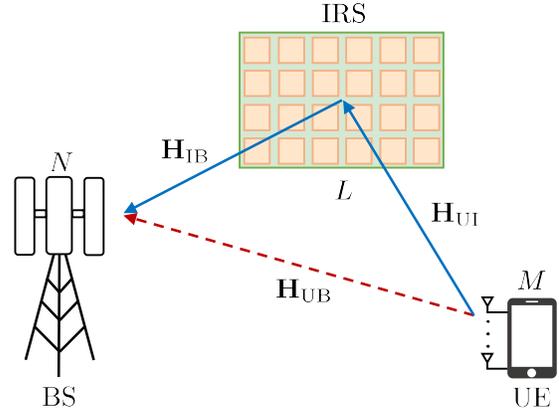}
	\caption{An IRS-empowered SU-MIMO communication system with $N$ BS antennas, $M$ UE antennas, and $L$ IRS elements.}
	\label{system model-fig}
\end{figure}

\section{System Model} \label{System Model} 
We consider an IRS-empowered time division duplexing (TDD) SU-MIMO system as shown in Fig. 1. The BS deploys~$N$ antennas and serves the UE equipped with $M$ antennas. The IRS, which consists of $L$ low-cost passive elements, is assumed to be connected to the BS via a controller where the BS is able to control the IRS elements for favorable signal reflection. For a practical setup, we consider a uniform planar array (UPA) for the BS and UE antennas and IRS elements.

During a channel coherence time block, the uplink received signal at the $t$-th time slot is \cite{Yang:2020_coherence}
\begin{align}
	\by_\mathrm{UL}[t]=\left(\bH_\mathrm{UB}+\bH_\mathrm{IB}\boldsymbol{\Phi}[t]\bH_\mathrm{UI}\right)\bff[t] s_\mathrm{UL}[t]+\bn_\mathrm{UL}[t],
\end{align}
where $s_\mathrm{UL}[t]\in \mathbb{C}$ is the transmit signal from the UE satisfying  $\mathbb{E}\{|s_\mathrm{UL}[t]|^2\}\leq P_\mathrm{UL}$ with the uplink transmit power $P_\mathrm{UL}$. The transmit beamformer $\bff[t] \in \mathbb{C}^{M \times 1}$ satisfies $\Vert\bff[t]\Vert^2=1$, and $\bn_\mathrm{UL}[t]\sim\cC\cN(\boldsymbol{0}_N,N_0\bI_N)$ is the thermal noise at the BS with the noise variance $N_0$. The uplink channels of the UE-BS direct link, IRS-BS link, and UE-IRS link are denoted by $\bH_{\mathrm{UB}}\in\mathbb{C}^{N\times M}$, $\bH_{\mathrm{IB}}\in\mathbb{C}^{N\times L}$, and $\bH_{\mathrm{UI}}\in\mathbb{C}^{L \times M}$, respectively. The $L \times L$ IRS reflection-coefficient matrix $\boldsymbol{\Phi}[t]$ is defined by $\diag\left(\left[\beta_1[t] e^{j\phi_1[t]},\cdots,\beta_L[t] e^{j\phi_L[t]}\right]^\mathrm{T}\right)$ where $\beta_\ell[t]$ and $\phi_\ell[t]$ are the magnitude and phase shift of the $\ell$-th IRS element. Considering practical passive operation of the IRS elements, we assume on/off magnitude $\beta_\ell[t]\in\{0,1\}$ and $B$-bit uniform quantization for each phase shift such that $\phi_\ell[t] \in \left\{ 0, \frac{2\pi}{2^B}, \cdots, \frac{(2^{B}-1)2\pi}{2^B}\right\}$. Since it is already shown in \cite{Phi_quantization} that $B \geq 4$ is enough to achieve almost the same performance of $B=\infty$, we first assume $B=\infty$ for conceptual explanation in Section~\ref{Channel Estimation Techniques} and Section~\ref{IRS Beamforming Techniques}. Then, the numerical results in Section~\ref{numerical results} are based on $B=4$ and $B=2$ for practically.

%Channel model - Rician
\begingroup\allowdisplaybreaks
We adopt the Rician fading with one line-of-sight (LoS) path and multiple non-line-of-sight (NLoS) paths for all channels \cite{Song:2017_Rician,Li:2019_Rician}. As an example, the uplink channel of the UE-BS direct link  $\bH_\mathrm{UB}$ is given~by
\begin{align} \label{channel model}
	%%%%%%%%%%%%%%%%%%%%% two columns %%%%%%%%%%%%%%%%%%%%%%%%%%%
	&\bH_\mathrm{UB}=\sqrt{\mu_0\left(d_\mathrm{UB}/d_0\right)^{-\eta_{\mathrm{UB}}}} \sqrt{\frac{{NM}}{1+K_\mathrm{UB}}}	\notag \\	&\times
	%%%%%%%%%%%%%%%%%%%%%% one column %%%%%%%%%%%%%%%%%%%%%%%%%%%
%	\bH_\mathrm{UB}=&\sqrt{\mu_0\left(d_\mathrm{UB}/d_0\right)^{-\eta_{\mathrm{UB}}}} \sqrt{\frac{{NM}}{1+K_\mathrm{UB}}}
	%%%%%%%%%%%%%%%%%%%%%%%%%%%%%%%%%%%%%%%%%%%%%%%%%%%%%%%%%%%%%
	\bigg(\sqrt{K_\mathrm{UB}} \alpha_{\mathrm{UB},0} \ba_\mathrm{BS}\left(\nu_{\mathrm{UB},0}^\mathrm{rx},\xi_{\mathrm{UB},0}^\mathrm{rx}\right)\ba_\mathrm{UE}^\mathrm{H}\left(\nu_{\mathrm{UB},0}^\mathrm{tx},\xi_{\mathrm{UB},0}^\mathrm{tx}\right)\notag \\ 
	&+\frac{1}{\sqrt{G_\mathrm{UB}}}\sum_{g=1}^{G_\mathrm{UB}}\alpha_{\mathrm{UB},g}\ba_\mathrm{BS}\left(\nu_{\mathrm{UB},g}^\mathrm{rx},\xi_{\mathrm{UB},g}^\mathrm{rx}\right)
	%%%%%%%%%%%%%%%%%%%%% two columns %%%%%%%%%%%%%%%%%%%%%%%%%%%
	\notag \\&\times
	%%%%%%%%%%%%%%%%%%%%%% one column %%%%%%%%%%%%%%%%%%%%%%%%%%%
	%%%%%%%%%%%%%%%%%%%%%%%%%%%%%%%%%%%%%%%%%%%%%%%%%%%%%%%%%%%%%
	\ba_\mathrm{UE}^\mathrm{H}\left(\nu_{\mathrm{UB},g}^\mathrm{tx},\xi_{\mathrm{UB},g}^\mathrm{tx}\right)\bigg), 
\end{align}
where $\mu_0$ is the path-loss at the distance $d_0$, and the distance and path-loss exponent between the UE and BS are denoted by $d_\mathrm{UB}$ and $\eta_{\mathrm{UB}}$ \cite{IRS_estimation_2,PL_1,PL_2}. The Rician K-factor is denoted by $K_\mathrm{UB}$, and $G_\mathrm{UB}$ is the total number of NLoS paths. For the $g$-th path, $\alpha_{\mathrm{UB},g}\sim \cC\cN(0,1)$ is the complex path gain, and the vertical and horizontal arrival spatial frequencies at the BS are defined by $\nu_{\mathrm{UB},g}^\mathrm{rx} \triangleq \pi\sin(\theta_{\mathrm{UB},g}^\mathrm{rx})$ and $\xi_{\mathrm{UB},g}^\mathrm{rx} \triangleq \pi \sin(\psi_{\mathrm{UB},g}^\mathrm{rx})\cos(\theta_{\mathrm{UB},g}^\mathrm{rx})$ with the vertical and horizontal arrival angles $\theta_{\mathrm{UB},g}^\mathrm{rx}$ and $ \psi_{\mathrm{UB},g}^\mathrm{rx}$. Similarly, the vertical and horizontal departure spatial frequencies at the UE are defined by $\nu_{\mathrm{UB},g}^\mathrm{tx} \triangleq \pi\sin(\theta_{\mathrm{UB},g}^\mathrm{tx})$ and $\xi_{\mathrm{UB},g}^\mathrm{tx} \triangleq \pi  \sin(\psi_{\mathrm{UB},g}^\mathrm{tx})\cos(\theta_{\mathrm{UB},g}^\mathrm{tx})$ with the vertical and horizontal departure angles $\theta_{\mathrm{UB},g}^\mathrm{tx}$ and $ \psi_{\mathrm{UB},g}^\mathrm{tx}$. Assuming half wavelength spacing, the array response vectors at the BS and UE, i.e., $\ba_\mathrm{BS}(\cdot)$ and $\ba_\mathrm{UE}(\cdot)$, are given~as
\endgroup 
\begin{align}
	%%%%%%%%%%%%%%%%%%%%% two columns %%%%%%%%%%%%%%%%%%%%%%%%%%%
	&\ba_\mathrm{BS}\left(\nu_{\mathrm{UB},g}^\mathrm{rx},\xi_{\mathrm{UB},g}^\mathrm{rx}\right) \notag\\
	=&\frac{1}{\sqrt{N}}\left[1,\cdots,e^{j(N_\mathrm{v}-1)\nu_{\mathrm{UB},g}^\mathrm{rx}} \right]^\mathrm{T} \otimes\left[ 1,\cdots,e^{j(N_\mathrm{h}-1)\xi_{\mathrm{UB},g}^\mathrm{rx}} \right]^\mathrm{T}, \label{BS array response vector} \\
	&\ba_\mathrm{UE}\left(\nu_{\mathrm{UB},g}^\mathrm{tx},\xi_{\mathrm{UB},g}^\mathrm{tx}\right) \notag \\ 
	=&\frac{1}{\sqrt{M}}\left[1,\cdots,e^{j(M_\mathrm{v}-1)\nu_{\mathrm{UB},g}^\mathrm{tx}} \right]^\mathrm{T} \otimes\left[ 1,\cdots,e^{j(M_\mathrm{h}-1)\xi_{\mathrm{UB},g}^\mathrm{tx}} \right]^\mathrm{T}, \label{UE array response vector}
	%%%%%%%%%%%%%%%%%%%%%% one column %%%%%%%%%%%%%%%%%%%%%%%%%%%
%	&\ba_\mathrm{BS}\left(\nu_{\mathrm{UB},p}^\mathrm{rx},\xi_{\mathrm{UB},p}^\mathrm{rx}\right)=\frac{1}{\sqrt{N}}\left[1,\cdots,e^{j(N_\mathrm{v}-1)\nu_{\mathrm{UB},p}^\mathrm{rx}} \right]^\mathrm{T} \otimes\left[ 1,\cdots,e^{j(N_\mathrm{h}-1)\xi_{\mathrm{UB},p}^\mathrm{rx}} \right]^\mathrm{T}, \label{BS array response vector} \\
%	&\ba_\mathrm{UE}\left(\nu_{\mathrm{UB},p}^\mathrm{tx},\xi_{\mathrm{UB},p}^\mathrm{tx}\right)=\frac{1}{\sqrt{M}}\left[1,\cdots,e^{j(M_\mathrm{v}-1)\nu_{\mathrm{UB},p}^\mathrm{tx}} \right]^\mathrm{T} \otimes\left[ 1,\cdots,e^{j(M_\mathrm{h}-1)\xi_{\mathrm{UB},p}^\mathrm{tx}} \right]^\mathrm{T},  \label{UE array response vector}
	%%%%%%%%%%%%%%%%%%%%%%%%%%%%%%%%%%%%%%%%%%%%%%%%%%%%%%%%%%%%%
\end{align}  
where $N=N_\mathrm{v}N_\mathrm{h}$ with $N_\mathrm{v}$ vertical and $N_\mathrm{h}$ horizontal antennas at the BS, and $M=M_\mathrm{v}M_\mathrm{h}$ with $M_\mathrm{v}$ vertical and $M_\mathrm{h}$ horizontal antennas at the UE. Note that $\bH_\mathrm{IB}$ and $\bH_\mathrm{UI}$ are modeled in the same way as in (\ref{channel model}) with proper adjustments on the distance, path-loss exponent, Rician K-factor, number of NLoS paths, number of antennas, array response vectors, and spatial frequencies.
% rank 1 matrices
%For the uplink channel estimation, it is hard for the BS to estimate $\bH_\mathrm{IB}$ and $\bH_\mathrm{UI}$ separately due to the passive elements at the IRS. Considering the fact that each IRS element adjusts the phase of channel, 

Considering the reflection of incident signals at each IRS element, we can express the cascaded channel through the IRS as a weighted sum of rank-one matrices, which is given as
%\begin{align}
%\bH_\mathrm{IB}\boldsymbol{\Phi}[t]\bH_\mathrm{UI}&=\sum_{\ell=1}^{L}\beta_\ell[t] e^{j\phi_\ell [t]} \bh_{\mathrm{IB},\ell}\bh_{\mathrm{UI},\ell}^\mathrm{H}, \label{sum of rank-1} \\
%\bH_\mathrm{IB}&=[\bh_{\mathrm{IB},1},\cdots,\bh_{\mathrm{IB},L}], \\
%\bH_\mathrm{UI}&=[\bh_{\mathrm{UI},1},\cdots,\bh_{\mathrm{UI},L}]^\mathrm{H}.
%\end{align}
\begin{align}
	\bH_\mathrm{IB}\boldsymbol{\Phi}[t]\bH_\mathrm{UI}&=\sum_{\ell=1}^{L}\beta_\ell[t] e^{j\phi_\ell [t]} \bh_{\mathrm{IB},\ell}\bh_{\mathrm{UI},\ell}^\mathrm{H}, \label{sum of rank-1}
\end{align}
where $\bH_\mathrm{IB}=[\bh_{\mathrm{IB},1},\cdots,\bh_{\mathrm{IB},L}]$ and $\bH_\mathrm{UI}=[\bh_{\mathrm{UI},1},\cdots,\bh_{\mathrm{UI},L}]^\mathrm{H}$.
In (\ref{sum of rank-1}), the $\ell$-th rank-one matrix $\bh_{\mathrm{IB},\ell}\bh_{\mathrm{UI},\ell}^\mathrm{H}$ is weighted by $\beta_\ell[t] e^{j\phi_\ell [t]}$. For simplicity, we denote the $\ell$-th rank-one matrix as $\bR_{\ell}=\bh_{\mathrm{IB},\ell}\bh_{\mathrm{UI},\ell}^\mathrm{H}$, which gives
\begin{align} \label{one_rank_sum}
	\bH_\mathrm{IB}\boldsymbol{\Phi}[t]\bH_\mathrm{UI}=\sum_{\ell=1}^{L}\beta_\ell[t] e^{j\phi_\ell [t]}\bR_{\ell}.
\end{align}
% 같은 정보를 담고 있다는 표현으로 바꾸기
The equality (\ref{one_rank_sum}) implies that it is sufficient to estimate the rank-one matrices $\bR_{\ell}$ instead of separately estimating $\bH_{\mathrm{IB}}$ and $\bH_{\mathrm{UI}}$. Hence, we will consider uplink channel estimation techniques for the direct channel $\bH_\mathrm{UB}$ and the rank-one matrices $\bR_{\ell}$ in Section~\ref{Channel Estimation Techniques}.

% % % % % % % % % % % % % % %
% % % % % % % % % % % % % % %
% % % % % % % % % % % % % % %

\section{IRS-Empowered MIMO Channel Estimation} \label{Channel Estimation Techniques}
In this section, we first explain the estimation of the direct link channel $ \bH_{\mathrm{UB}} $ and then elaborate on four channel estimation techniques to estimate the $ L $ rank-one matrices $ \bR_\ell $ in detail. Two rudimentary and straightforward techniques are described first as baselines, followed by two novel ones, SPAC and SEROM, which have low training overhead.

\subsection{UE-BS direct link channel estimation} \label{Direct Link Channel Estimation}
\beginbreak

To estimate the direct link channel $ \bH_{\mathrm{UB}}$, the BS turns off all the IRS elements as $ \boldsymbol{\Phi} [t]= \mathrm{diag}(\b0_L)$. The UE transmits the length $ \tau_{\mathrm{d}} $ training sequence using the training beamformer $ \bff[t] $ for $ 1 \leq t \leq \tau_{\mathrm{d}}$ with the training signal $ s_\mathrm{UL}[t]=\sqrt{P_\mathrm{UL}} $. By stacking the $ \tau_{\mathrm{d}} $ received signals,~we~have
\begin{align}
	\bY_{\mathrm{UB}}&=\left[\by_\mathrm{UL}[1],\cdots,\by_\mathrm{UL}[\tau_\mathrm{d}]\right]
	%%%%%%%%%%%%%%%%%%%%% two columns %%%%%%%%%%%%%%%%%%%%%%%%%%%
	\notag \\&
	%%%%%%%%%%%%%%%%%%%%%% one column %%%%%%%%%%%%%%%%%%%%%%%%%%%
	%%%%%%%%%%%%%%%%%%%%%%%%%%%%%%%%%%%%%%%%%%%%%%%%%%%%%%%%%%%%%
	=\sqrt{P_\mathrm{UL}}\bH_{\mathrm{UB}}\bF_{\mathrm{UB}}+\bN_{\mathrm{UB}},
	%\\
	%\bF_{\mathrm{UB}} &= \left[\bff[1],\cdots,\bff[\tau_{\mathrm{d}}]\right], \\
	%\bN_{\mathrm{UB}} &= \left[\bn[1],\cdots,\bn[\tau_{\mathrm{d}}]\right],
\end{align}
where $ \bF_{\mathrm{UB}}= \left[\bff[1],\cdots,\bff[\tau_{\mathrm{d}}]\right]$ is the training beamformer, and $ \bN_{\mathrm{UB}}= [\bn_\mathrm{UL}[1],\cdots,$ $\bn_\mathrm{UL}[\tau_{\mathrm{d}}]] $ is the noise. The training beamformer $ \bF_{\mathrm{UB}}$ can be composed of $ M $ rows of $ \tau_{\mathrm{d}}\times \tau_{\mathrm{d}} $ discrete Fourier transform (DFT) matrix with proper normalization, and we set $ \tau_{\mathrm{d}} =M $ to take the minimum sequence length such that $ \bF_{\mathrm{UB}}\bF_{\mathrm{UB}}^{\mathrm{H}}=\bI_M $. Then, the channel estimate for the direct link between the UE and BS is computed as
\begin{align}
	\widehat{\bH}_\mathrm{UB}
	%&=\frac{\bY_\mathrm{UB} \bF_\mathrm{UB}^\mathrm{H}}{\sqrt{P_\mathrm{UL}}},\\
	&=\frac{1}{\sqrt{P_\mathrm{UL}}}\bY_\mathrm{UB} \bF_\mathrm{UB}^\mathrm{H}
	%%%%%%%%%%%%%%%%%%%%% two columns %%%%%%%%%%%%%%%%%%%%%%%%%%%
	\notag\\
	%&=\bH_\mathrm{UB} + \frac{\bN_\mathrm{UB}\bF_\mathrm{UB}^\mathrm{H}}{\sqrt{P_\mathrm{UL}}}.
	&
	%%%%%%%%%%%%%%%%%%%%%% one column %%%%%%%%%%%%%%%%%%%%%%%%%%%
	%%%%%%%%%%%%%%%%%%%%%%%%%%%%%%%%%%%%%%%%%%%%%%%%%%%%%%%%%%%%%
	=\bH_\mathrm{UB} + \frac{1}{\sqrt{P_\mathrm{UL}}}\bN_\mathrm{UB}\bF_\mathrm{UB}^\mathrm{H}.
\end{align}

We define the additional training sequence length as $ \tau_{\mathrm{c}} $, which varies with estimation techniques, to estimate the cascaded UE-IRS-BS channel represented with $ \bR_\ell $. For $ \tau_{\mathrm{d}} + 1\leq t \leq \tau_{\mathrm{d}}+\tau_{\mathrm{c}} $, the BS eliminates the effect of direct link channel~as
\begin{align}
	\tilde{\by}_\mathrm{UL}[t]
	=&\by_\mathrm{UL}[t] - \sqrt{P_\mathrm{UL}}\widehat{\bH}_\mathrm{UB}\bff[t]\notag\\
	%%%%%%%%%%%%%%%%%%%%% two columns %%%%%%%%%%%%%%%%%%%%%%%%%%%
	=&\sqrt{P_\mathrm{UL}}\bigg( \left(\bH_\mathrm{UB} - \widehat{\bH}_\mathrm{UB}\right)\notag\\ &+\bH_\mathrm{IB}\boldsymbol{\Phi}[t]\bH_\mathrm{UI}\bigg)\bff[t]+\bn_\mathrm{UL}[t]\notag\\
	%%%%%%%%%%%%%%%%%%%%%% one column %%%%%%%%%%%%%%%%%%%%%%%%%%%
%	=&\sqrt{P_\mathrm{UL}}\left( \left(\bH_\mathrm{UB} - \widehat{\bH}_\mathrm{UB}\right)+\bH_\mathrm{IB}\boldsymbol{\Phi}[t]\bH_\mathrm{UI}\right)\bff[t]+\bn_\mathrm{UL}[t],\notag\\
	%%%%%%%%%%%%%%%%%%%%%%%%%%%%%%%%%%%%%%%%%%%%%%%%%%%%%%%%%%%%%
	%=&\blue{\sqrt{P_\mathrm{UL}}\sum_{\ell=1}^{L}\beta_\ell[t]e^{j\phi_\ell [t]}\bR_{\ell}\bff[t]} \notag \\
	%&\blue{+\underbrace{\sqrt{P_\mathrm{UL}}\left( \bH_\mathrm{UB} - \widehat{\bH}_\mathrm{UB}\right)\bff[t]+\bn[t]}_{\triangleq \tilde{\bn}[t]}, }\label{Modified signal} \\
	%=&\sqrt{P_\mathrm{UL}}\sum_{\ell=1}^{L}\beta_\ell[t]e^{j\phi_\ell [t]}\bR_{\ell}\bff[t] +\left(-\bN_\mathrm{UB}\bF_\mathrm{UB}^\mathrm{H}\right)\bff[t]+\bn[t],
	=&\sqrt{P_\mathrm{UL}}\sum_{\ell=1}^{L}\beta_\ell[t]e^{j\phi_\ell [t]}\bR_{\ell}\bff[t]
	%%%%%%%%%%%%%%%%%%%%% two columns %%%%%%%%%%%%%%%%%%%%%%%%%%%
	\notag\\	&
	%%%%%%%%%%%%%%%%%%%%%% one column %%%%%%%%%%%%%%%%%%%%%%%%%%%
	%%%%%%%%%%%%%%%%%%%%%%%%%%%%%%%%%%%%%%%%%%%%%%%%%%%%%%%%%%%%%
	+\underbrace{\sqrt{P_\mathrm{UL}}\left(-\bN_\mathrm{UB}\bF_\mathrm{UB}^\mathrm{H}\right)\bff[t]+\bn_\mathrm{UL}[t]}_{\triangleq \tilde{\bn}_\mathrm{UL}[t]}, 
	\label{Modified signal}
\end{align}
where $ \tilde{\bn}_\mathrm{UL}[t] $ is the effective uplink noise. We adopt the received signal $ \tilde{\by}_\mathrm{UL}[t] $ to explain the rank-one matrix estimation in the following subsections.

\subsection{One-by-one (OBO) channel estimation}
\label{Tech1}
The OBO estimation is to simply estimate $ L $ rank-one matrices one by one. The BS can estimate the $ \ell $-th rank-one matrix $\bR_\ell$ by turning on only the $ \ell $-th IRS element while keeping the others off and conduct this process in turn for each $ \ell $. Specifically, the IRS reflection-coefficient matrix $ \boldsymbol{\Phi}^{(\ell)} $ to estimate $\bR_\ell$ is defined as
\endgroup
\begin{align}\label{OBO IRS phase shift matrix}
	\boldsymbol{\Phi}^{(\ell)}=\diag\left(\left[\boldsymbol{0}_{\ell-1}^{\mathrm{T}},e^{j\phi_\ell},\boldsymbol{0}_{L-\ell}^{\mathrm{T}}\right]^{\mathrm{T}}\right),
\end{align}
and $ \boldsymbol{\Phi}[t] $ is fixed as $ \boldsymbol{\Phi}^{(\ell)} $ during the $ \ell $-th training period $ {\tau_{\mathrm{d}}+(\ell-1)M +1 \leq t \leq \tau_{\mathrm{d}}+\ell M} $. The UE transmits the length $ M $ training sequence with $ s_\mathrm{UL}[t]=\sqrt{P_\mathrm{UL}} $ using $ \bff[t] $ during the $ \ell $-th training period. The BS stacks the $ M $ received signals in \eqref{Modified signal} to estimate the $ \ell $-th rank-one matrix $\bR_\ell$, written~as
\begin{align}\label{Modified received signal for UIB_1}
	\widetilde{\bY}_{\mathrm{UIB},\ell} &= \left[\tilde{\by}_\mathrm{UL}[\tau_{\mathrm{d}}+(\ell-1)M+1],\cdots,\tilde{\by}_\mathrm{UL}[\tau_{\mathrm{d}}+\ell M]\right]
	%%%%%%%%%%%%%%%%%%%%% two columns %%%%%%%%%%%%%%%%%%%%%%%%%%%
	\notag\\	&
	%%%%%%%%%%%%%%%%%%%%%% one column %%%%%%%%%%%%%%%%%%%%%%%%%%%
	%%%%%%%%%%%%%%%%%%%%%%%%%%%%%%%%%%%%%%%%%%%%%%%%%%%%%%%%%%%%%
	= \sqrt{P_\mathrm{UL}}e^{j\phi_\ell} \bR_\ell\bF_{\mathrm{UIB},\ell}+\widetilde{\bN}_{\mathrm{UIB},\ell},
\end{align}
where $ \bF_{\mathrm{UIB},\ell}\triangleq\left[\bff\left[\tau_{\mathrm{d}}+(\ell-1)M+1\right],\cdots,\bff\left[\tau_{\mathrm{d}}+\ell M\right]\right]$ and $ \widetilde{\bN}_{\mathrm{UIB},\ell}\triangleq[\tilde{\bn}_\mathrm{UL}[\tau_{\mathrm{d}}+(\ell-1)M+1],
\cdots,\tilde{\bn}_\mathrm{UL}\left[\tau_{\mathrm{d}}+\ell M\right]] $ are respectively the training beamformer and noise. As in Section~\ref{Direct Link Channel Estimation}, the normalized $ M\times M $ DFT matrix can be used as the training beamformer~$ \bF_{\mathrm{UIB},\ell} $.

With the $ M $ received signals in \eqref{Modified received signal for UIB_1}, the BS estimates the rank-one matrix for the $ \ell $-th IRS~element~as
\begin{align}
	\widehat{\bR}_\ell
	%&=\frac{e^{-j\phi_\ell}\widetilde{\bY}_\mathrm{UIB,\ell} }{\sqrt{P_\mathrm{UL}}} \bF_\mathrm{UIB,\ell}^\mathrm{H} ,\\
	%&= \bR_\ell + \frac{e^{-j\phi_\ell}\widetilde{\bN}_\mathrm{UIB,\ell}\bF_\mathrm{UIB,\ell}^\mathrm{H}}{\sqrt{P_\mathrm{UL}}}.
	&=\frac{e^{-j\phi_\ell}}{\sqrt{P_\mathrm{UL}}}\widetilde{\bY}_\mathrm{UIB,\ell}  \bF_\mathrm{UIB,\ell}^\mathrm{H}
	%%%%%%%%%%%%%%%%%%%%% two columns %%%%%%%%%%%%%%%%%%%%%%%%%%%
	\notag\\	&
	%%%%%%%%%%%%%%%%%%%%%% one column %%%%%%%%%%%%%%%%%%%%%%%%%%%
	%%%%%%%%%%%%%%%%%%%%%%%%%%%%%%%%%%%%%%%%%%%%%%%%%%%%%%%%%%%%%
	= \bR_\ell + \frac{e^{-j\phi_\ell}}{\sqrt{P_\mathrm{UL}}}\widetilde{\bN}_\mathrm{UIB,\ell}\bF_\mathrm{UIB,\ell}^\mathrm{H}.
\end{align}
Conducting this process for all $ L $ rank-one matrices,
the additional training sequence length for the OBO estimation becomes ${\tau_\mathrm{c}=\tau_{\mathrm{OBO}}=LM} $. It is obvious that the OBO estimation is inefficient since only one IRS element is turned on during each training period, resulting in significantly high training~overhead.

\subsection{Cooperative One-by-one (Co-OBO) channel estimation}
\label{Tech2}
With cooperative uplink and downlink signalings, the training sequence length can be made significantly small, compared to that of the OBO estimation in Section~\ref{Tech1}. The IRS reflection-coefficient matrix in the Co-OBO estimation is the same as in \eqref{OBO IRS phase shift matrix} but is employed only for two time slots for each $ \ell $. In other words, in order to estimate $ \bR_\ell $, $ \boldsymbol{\Phi}[t] $ is fixed as $ \boldsymbol{\Phi}^{(\ell)} $ during the $ \ell $-th training period $ \tau_{\mathrm{d}}+2(\ell-1)+1 \leq t \leq \tau_{\mathrm{d}}+2\ell $. The uplink and downlink signalings are sequentially conducted for the first and second time slots of each training period, i.e., the UE transmits $ s_\mathrm{UL}[\tau_{\mathrm{d}}+2(\ell-1)+1]=\sqrt{P_\mathrm{UL}} $, and the BS transmits $ s_\mathrm{DL}[\tau_{\mathrm{d}}+2\ell]=\sqrt{P_\mathrm{DL}} $ with the downlink transmit power $P_\mathrm{DL}$. For the first time slot, the $ \ell $-th uplink signal $ \tilde{\by}_{\mathrm{UL},\ell} \triangleq \tilde{\by}_\mathrm{UL}[ \tau_{\mathrm{d}}+2(\ell-1)+1] $ is expressed as
\begin{align}\label{Co-OBO UL}
	\tilde{\by}_{\mathrm{UL},\ell}
	&=
	e^{j\phi_\ell} \bh_{\mathrm{IB},\ell}\underbrace{\left(\sqrt{P_\mathrm{UL}}\bh_{\mathrm{UI},\ell}^\mathrm{H}\bff_{\mathrm{UL},\ell}\right)}_{\triangleq \tilde{s}_{\mathrm{UL},\ell}} + \tilde{\bn}_{\mathrm{UL},\ell},
\end{align}
where we define $ \bff_{\mathrm{UL},\ell} =\bff[\tau_{\mathrm{d}}+2(\ell-1)+1]$ and $ \tilde{\bn}_{\mathrm{UL},\ell} =\tilde{\bn}\left[\tau_{\mathrm{d}}+2(\ell-1)+1\right] $ for simplicity. The product $ \sqrt{P_\mathrm{UL}}\bh_{\mathrm{UI},\ell}^\mathrm{H}\bff_{\mathrm{UL},\ell} $ in \eqref{Co-OBO UL} can be regarded as the effective scalar-valued signal $ \tilde{s}_{\mathrm{UL},\ell} $.

Applying the channel reciprocity from TDD \cite{Larsson:2014_TDD}, the downlink received signal is
\begin{align}\label{DL received signal}
	\by_{\mathrm{DL}}[t]
	=
	\left(\bH_\mathrm{UB}^\mathrm{H}+\bH_\mathrm{IB}^\mathrm{H}\boldsymbol{\Phi}^\mathrm{H}[t]\bH_\mathrm{UI}^\mathrm{H}\right)\bw[t]s_\mathrm{DL}[t]+\bn_\mathrm{DL}[t],
\end{align}
where $ \bw[t] $ is the training beamformer at the BS satisfying $ \lVert \bw[t] \rVert^2 = 1$, and ${\bn_\mathrm{DL}[t]\sim\cC\cN(\boldsymbol{0}_M,N_0\bI_M)} $ is the thermal noise at the UE. Assuming perfect analog feedback, the UE feeds the received downlink signal $ \by_\mathrm{DL}[\tau_{\mathrm{d}}+2\ell] $ back to the BS. Then, similar to \eqref{Modified signal}, the BS can compute $ \tilde{\by}_{\mathrm{DL},\ell} $ as
\begin{align}\label{Co-OBO DL}
	%%%%%%%%%%%%%%%%%%%%% two columns %%%%%%%%%%%%%%%%%%%%%%%%%%%
	&\tilde{\by}_{\mathrm{DL},\ell}\notag\\
	%%%%%%%%%%%%%%%%%%%%%% one column %%%%%%%%%%%%%%%%%%%%%%%%%%%
%	\tilde{\by}_{\mathrm{DL},\ell}
	%%%%%%%%%%%%%%%%%%%%%%%%%%%%%%%%%%%%%%%%%%%%%%%%%%%%%%%%%%%%%
	=& \by_\mathrm{DL}[\tau_{\mathrm{d}}+2\ell] - \sqrt{P_\mathrm{DL}}\widehat{\bH}_\mathrm{UB}^\mathrm{H}\bw_{\mathrm{DL},\ell}\notag\\
	=&\sqrt{P_\mathrm{DL}}\left( \left(\bH_\mathrm{UB}^\mathrm{H} - \widehat{\bH}_\mathrm{UB}^\mathrm{H}\right)+e^{-\phi_\ell}\bh_{\mathrm{UI},\ell}\bh_{\mathrm{IB},\ell}^\mathrm{H}\right)\bw_{\mathrm{DL},\ell}+\bn_{\mathrm{DL},\ell}\notag\\
	=&
	e^{-j\phi_\ell} \bh_{\mathrm{UI},\ell}\underbrace{\left(\sqrt{P_\mathrm{DL}}\bh_{\mathrm{IB},\ell}^\mathrm{H}\bw_{\mathrm{DL},\ell}\right)}_{\triangleq \tilde{s}_{\mathrm{DL},\ell}}
	%%%%%%%%%%%%%%%%%%%%% two columns %%%%%%%%%%%%%%%%%%%%%%%%%%%
	\notag\\	&
	%%%%%%%%%%%%%%%%%%%%%% one column %%%%%%%%%%%%%%%%%%%%%%%%%%%
	%%%%%%%%%%%%%%%%%%%%%%%%%%%%%%%%%%%%%%%%%%%%%%%%%%%%%%%%%%%%%
	+\underbrace{\sqrt{P_\mathrm{DL}}\left(-\bN_\mathrm{UB}\bF_\mathrm{UB}^\mathrm{H}\right)^\mathrm{H}\bw_{\mathrm{DL},\ell}+\bn_{\mathrm{DL},\ell}}_{\triangleq \tilde{\bn}_{\mathrm{DL},\ell}}, 
\end{align}
where we define $ \bw_{\mathrm{DL},\ell} =\bw[\tau_{\mathrm{d}}+2\ell]$ and $ \bn_{\mathrm{DL},\ell} = \bn_\mathrm{DL}[\tau_{\mathrm{d}}+2\ell] $, and $ \tilde{\bn}_{\mathrm{DL},\ell} $ is the effective downlink noise. Again, the product $ \sqrt{P_\mathrm{DL}}\bh_\mathrm{IB,\ell}^\mathrm{H} \bw_{\mathrm{DL},\ell} $ in \eqref{Co-OBO DL} can be regarded as the effective scalar-valued signal~$ \tilde{s}_{\mathrm{DL},\ell} $.

Using $ \tilde{\by}_{\mathrm{UL},\ell} $ and $ \tilde{\by}_{\mathrm{DL},\ell} $ in \eqref{Co-OBO UL} and \eqref{Co-OBO DL}, the BS finally estimates the rank-one matrix as
\begin{align}
	\widehat{\bR}_\ell &= \frac{\tilde{\by}_{\mathrm{UL},\ell}\tilde{\by}_{\mathrm{DL},\ell}^\mathrm{H}}{(\sqrt{P_\mathrm{DL}}e^{-j\phi_\ell}\bw_{\mathrm{DL},\ell})^\mathrm{H}\tilde{\by}_{\mathrm{UL},\ell}}
	%%%%%%%%%%%%%%%%%%%%% two columns %%%%%%%%%%%%%%%%%%%%%%%%%%%
	\notag\\	&
	%%%%%%%%%%%%%%%%%%%%%% one column %%%%%%%%%%%%%%%%%%%%%%%%%%%
	%%%%%%%%%%%%%%%%%%%%%%%%%%%%%%%%%%%%%%%%%%%%%%%%%%%%%%%%%%%%%
	=\frac{\bh_{\mathrm{IB},\ell}\tilde{s}_{\mathrm{UL},\ell}\tilde{s}^*_{\mathrm{DL},\ell}\bh_{\mathrm{UI},\ell}^\mathrm{H}}{\tilde{s}^*_{\mathrm{DL},\ell}\tilde{s}_{\mathrm{UL},\ell} + \sqrt{P_\mathrm{DL}}e^{-j\phi_\ell}\bw^\mathrm{H}_{\mathrm{DL},\ell}\tilde{\bn}_{\mathrm{UL},\ell}} +\widetilde{\bN}_\ell
	%%%%%%%%%%%%%%%%%%%%% two columns %%%%%%%%%%%%%%%%%%%%%%%%%%%
	\notag\\	&
	%%%%%%%%%%%%%%%%%%%%%% one column %%%%%%%%%%%%%%%%%%%%%%%%%%%
	%%%%%%%%%%%%%%%%%%%%%%%%%%%%%%%%%%%%%%%%%%%%%%%%%%%%%%%%%%%%%
	=\frac{\bR_\ell}{1+\tilde{n}_\ell}+\widetilde{\bN}_\ell,
\end{align}
where $ \tilde{n}_\ell $ and $ \widetilde{\bN}_\ell $ are the noise terms, expressed as
\begin{align}
	\tilde{n}_\ell =& \frac{\sqrt{P_\mathrm{DL}}e^{-j\phi_\ell}\bw^\mathrm{H}_{\mathrm{DL},\ell}\tilde{\bn}_{\mathrm{UL},\ell}}{\tilde{s}_{\mathrm{UL},\ell}\tilde{s}^*_{\mathrm{DL},\ell}}, \\
	\widetilde{\bN}_\ell
	=&\frac{1}{1+\tilde{n}_\ell}
	\bigg(\frac{e^{-j\phi_\ell}\bh_{\mathrm{IB},\ell}\tilde{\bn}_{\mathrm{DL},\ell}^\mathrm{H}}{\tilde{s}^*_{\mathrm{DL},\ell}}
	%%%%%%%%%%%%%%%%%%%%% two columns %%%%%%%%%%%%%%%%%%%%%%%%%%%
	\notag\\&
	%%%%%%%%%%%%%%%%%%%%%% one column %%%%%%%%%%%%%%%%%%%%%%%%%%%
	%%%%%%%%%%%%%%%%%%%%%%%%%%%%%%%%%%%%%%%%%%%%%%%%%%%%%%%%%%%%%
	+\frac{e^{-j\phi_\ell}\tilde{\bn}_{\mathrm{UL},\ell}\bh_{\mathrm{UI},\ell}^\mathrm{H}}{\tilde{s}_{\mathrm{UL},\ell}}+
	\frac{e^{-j2\phi_\ell}\tilde{\bn}_{\mathrm{UL},\ell}\tilde{\bn}_{\mathrm{DL},\ell}^\mathrm{H}}{\tilde{s}_{\mathrm{UL},\ell}\tilde{s}^*_{\mathrm{DL},\ell}}\bigg),
\end{align}
respectively. The Co-OBO estimation requires only two time slots to estimate $ {\bR}_\ell $ for each $ \ell $, which implies that the additional training sequence length for the Co-OBO estimation is $ \tau_\mathrm{c}=\tau_{\mathrm{Co-OBO}} = 2L $. For $ M \gg 2 $, which is valid for typical MIMO systems, it is obvious that $ \tau_{\mathrm{Co-OBO}} \ll \tau_\mathrm{OBO}=LM $. However, employing only two training signals to estimate each rank-one matrix makes the Co-OBO estimation vulnerable to burst noise, and perfect analog feedback is difficult to achieve in practice as well.

\subsection{Single-path approximated channel (SPAC)}
\label{SPAC}
We propose SPAC to overcome the high training overhead of OBO estimation and the burst noise issue of Co-OBO estimation. SPAC is developed to consider the structural property of IRS-empowered system and to extract the necessary channel parameters. SPAC estimates the rank-one matrices by approximating $ \bH_\mathrm{IB} $ and $ \bH_\mathrm{UI} $ as dominant single-path~channels.

The single-path approximations for $ \bH_\mathrm{IB} $ and $ \bH_\mathrm{UI} $ are expressed as
\begin{align}
	\bH_\mathrm{IB}&\approx \widetilde{\bH}_\mathrm{IB} =
	\gamma_{\mathrm{IB}}
	\ba_\mathrm{BS}\left(\nu_{\mathrm{IB}}^\mathrm{rx},\xi_{\mathrm{IB}}^\mathrm{rx}\right)\ba_\mathrm{IRS}^\mathrm{H}\left(\nu_{\mathrm{IB}}^\mathrm{tx},\xi_{\mathrm{IB}}^\mathrm{tx}\right), \label{SPAC_IB}\\
	\bH_\mathrm{UI}&\approx\widetilde{\bH}_\mathrm{UI} = 
	\gamma_{\mathrm{UI}}
	\ba_\mathrm{IRS}\left(\nu_{\mathrm{UI}}^\mathrm{rx},\xi_{\mathrm{UI}}^\mathrm{rx}\right)\ba_\mathrm{UE}^\mathrm{H}\left(\nu_{\mathrm{UI}}^\mathrm{tx},\xi_{\mathrm{UI}}^\mathrm{tx}\right). \label{SPAC_UI}
\end{align}
Focusing on \eqref{SPAC_IB}, $ \gamma_{\mathrm{IB}} $ is the effective complex-valued gain between the IRS and BS. Similar to \eqref{BS array response vector} and \eqref{UE array response vector}, the array response vector at the IRS is given as
\begin{align}
	\label{Approx_arv_IRS}
	&\ba_\mathrm{IRS}(\nu_{\mathrm{IB}}^\mathrm{tx},\xi_{\mathrm{IB}}^\mathrm{tx})
	%%%%%%%%%%%%%%%%%%%%% two columns %%%%%%%%%%%%%%%%%%%%%%%%%%%
	\notag\\	=&
	%%%%%%%%%%%%%%%%%%%%%% one column %%%%%%%%%%%%%%%%%%%%%%%%%%%
%	=
	%%%%%%%%%%%%%%%%%%%%%%%%%%%%%%%%%%%%%%%%%%%%%%%%%%%%%%%%%%%%%
	\frac{1}{\sqrt{L}}\left[1,\cdots,e^{j(L_\mathrm{v}-1)\nu_{\mathrm{IB}}^\mathrm{tx}} \right]^\mathrm{T}\otimes\left[ 1,\cdots,e^{j(L_\mathrm{h}-1)\xi_{\mathrm{IB}}^\mathrm{tx}} \right]^\mathrm{T},
\end{align}
with the vertical and horizontal spatial frequencies $ \nu_{\mathrm{IB}}^\mathrm{tx}$ and $\xi_{\mathrm{IB}}^\mathrm{tx} $. The numbers of vertical and horizontal IRS elements are denoted by $ L_\mathrm{v} $ and $ L_\mathrm{h} $ satisfying $ L=  L_\mathrm{v}L_\mathrm{h}$. The parameters in \eqref{SPAC_UI} are similarly defined.

We can estimate the two gains and eight spatial frequencies embedded on the rank-one~matrix
\begin{align}
	\widetilde{\bR}_\ell =&\widetilde{\bH}_{\mathrm{IB}}(:,\ell)\widetilde{\bH}_{\mathrm{UI}}(\ell,:)
	%%%%%%%%%%%%%%%%%%%%% two columns %%%%%%%%%%%%%%%%%%%%%%%%%%%
	\notag\\ =&
	%%%%%%%%%%%%%%%%%%%%%% one column %%%%%%%%%%%%%%%%%%%%%%%%%%%
%	=
	%%%%%%%%%%%%%%%%%%%%%%%%%%%%%%%%%%%%%%%%%%%%%%%%%%%%%%%%%%%%%
	\gamma_{\mathrm{IB}}\ba_\mathrm{BS}\left(\nu_{\mathrm{IB}}^\mathrm{rx},\xi_{\mathrm{IB}}^\mathrm{rx}\right)[\ba_\mathrm{IRS}^\mathrm{H}\left(\nu_{\mathrm{IB}}^\mathrm{tx},\xi_{\mathrm{IB}}^\mathrm{tx}\right)]_\ell  
	%%%%%%%%%%%%%%%%%%%%% two columns %%%%%%%%%%%%%%%%%%%%%%%%%%%
	\notag \\	&\times 
	%%%%%%%%%%%%%%%%%%%%%% one column %%%%%%%%%%%%%%%%%%%%%%%%%%%
	%%%%%%%%%%%%%%%%%%%%%%%%%%%%%%%%%%%%%%%%%%%%%%%%%%%%%%%%%%%%%
	\gamma_{\mathrm{UI}}
	[\ba_\mathrm{IRS}\left(\nu_{\mathrm{UI}}^\mathrm{rx},\xi_{\mathrm{UI}}^\mathrm{rx}\right)]_\ell\ba_\mathrm{UE}^\mathrm{H}\left(\nu_{\mathrm{UI}}^\mathrm{tx},\xi_{\mathrm{UI}}^\mathrm{tx}\right),\label{SPAC_cascaded_prototype}
\end{align}
for each $ \ell $. The novel part of SPAC is that the BS does not estimate all the parameters in \eqref{SPAC_cascaded_prototype} separately but acquire the effective parameters concerned with them. The overall process of SPAC is summarized as follows: 
\begin{enumerate}[\indent Step  1:]
	\item By sequentially turning on only a small number of IRS elements one by one, a few rank-one matrices $ \bR_\ell $ are estimated by the OBO estimation.
	\item The spatial frequencies $ \left({\nu}_{\mathrm{IB}}^\mathrm{rx},{\xi}_{\mathrm{IB}}^\mathrm{rx}\right)$ and $\left({\nu}_{\mathrm{UI}}^\mathrm{tx},{\xi}_{\mathrm{UI}}^\mathrm{tx}\right)$ are estimated to reconstruct the array response vectors $ \ba_\mathrm{BS}\left({\nu}_{\mathrm{IB}}^\mathrm{rx},{\xi}_{\mathrm{IB}}^\mathrm{rx}\right) $ and $ \ba_\mathrm{UE}\left({\nu}_{\mathrm{UI}}^\mathrm{tx},{\xi}_{\mathrm{UI}}^\mathrm{tx}\right) $ at the BS and UE~sides.
	\item The two effective IRS-side spatial frequencies are estimated to obtain $ [\ba_\mathrm{IRS}^\mathrm{H}\left({\nu}_{\mathrm{IB}}^\mathrm{tx},{\xi}_{\mathrm{IB}}^\mathrm{tx}\right)]_\ell
	$ $\times
	[\ba_\mathrm{IRS}\left({\nu}_{\mathrm{UI}}^\mathrm{rx},{\xi}_{\mathrm{UI}}^\mathrm{rx}\right)]_\ell $ for all $ \ell $.
	\item The overall gain $ {\gamma}_{\mathrm{IB}}{\gamma}_{\mathrm{UI}} $ common for the rank-one matrices is obtained.
	\item The remaining rank-one matrices not estimated in Step 1 are constructed by \eqref{SPAC_cascaded_prototype} using the parameters obtained from Step 2-4.
\end{enumerate}
\begin{figure}[t]
	\centering
	\includegraphics[width=0.45\textwidth]{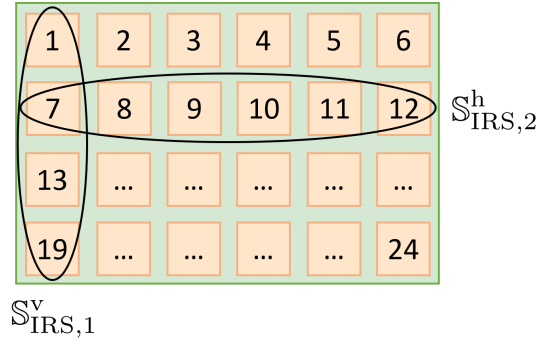} 
	\caption{An example to define the index sets for IRS elements $ \mathbb{S}_{\mathrm{IRS},x}^{\mathrm{v}} $ and $ \mathbb{S}_{\mathrm{IRS},y}^{\mathrm{h}} $ with ${L= L_\mathrm{v}\times L_\mathrm{h} = 4\times 6} $.}
	\label{SPAC_IRS_index}
\end{figure}

For clear understanding of the estimation process, we first specify the IRS element index sets for the $ x $-th column and the $ y $-th row as $ \mathbb{S}_{\mathrm{IRS},x}^{\mathrm{v}} $ and $ \mathbb{S}_{\mathrm{IRS},y}^{\mathrm{h}} $, respectively. The common sequential numbering is considered to index the IRS elements as in Fig.~\ref{SPAC_IRS_index}. With such indexing, the two index sets $ \mathbb{S}_{\mathrm{IRS},x}^{\mathrm{v}} $ and $ \mathbb{S}_{\mathrm{IRS},y}^{\mathrm{h}} $ are defined as
\begin{align}
	\mathbb{S}_{\mathrm{IRS},x}^{\mathrm{v}}&= \left\{x,L_\mathrm{h}+x,\cdots,(L_\mathrm{v}-1)L_\mathrm{h}+x\right\}, \\
	\mathbb{S}_{\mathrm{IRS},y}^{\mathrm{h}} &= \left\{(y-1)L_\mathrm{h}+1,(y-1)L_\mathrm{h}+2,\cdots,yL_\mathrm{h}\right\}.
\end{align}
In terms of the BS and UE, the UPA antenna index sets are similarly defined. 

In Step 1, the index set $ \mathbb{S}_{\mathrm{IRS}} \subset \{1,\cdots,L\}$ is defined, and the BS estimates the rank-one matrices $ {\bR}_\ell $ only for $ \ell \in \mathbb{S}_{\mathrm{IRS}} $ using the OBO estimation in Section~\ref{Tech1}. Considering the UPA structure of IRS, we employ $ \mathbb{S}_{\mathrm{IRS}} \triangleq  \mathbb{S}_{\mathrm{IRS},1}^{\mathrm{v}}\cup \mathbb{S}_{\mathrm{IRS},1}^{\mathrm{h}} $ in order that the set $ \mathbb{S}_{\mathrm{IRS}} $ contains the information of both the vertical and horizontal spatial frequencies at the IRS side. To reduce the training overhead, we let $ \mathbb{S}_{\mathrm{IRS}} $ be a small set with $ L_\mathrm{v} + L_\mathrm{h}-1$ IRS elements, while the set can include multiple columns and rows of the IRS elements. Once the rank-one matrices for $ \mathbb{S}_{\mathrm{IRS}} $ are estimated by the OBO estimation, the BS extracts the effective parameters based on the estimates $ \widehat{\bR}_\ell $ for $ \ell \in \mathbb{S}_{\mathrm{IRS}} $ to construct the remaining rank-one matrices for $ \ell \notin \mathbb{S}_{\mathrm{IRS}} $.

The spatial frequencies related to the BS and UE sides are estimated in Step 2 to reconstruct $ \ba_\mathrm{BS}\left(\nu_{\mathrm{IB}}^\mathrm{rx},\xi_{\mathrm{IB}}^\mathrm{rx}\right) $ and $ \ba_\mathrm{UE}\left(\nu_{\mathrm{UI}}^\mathrm{tx},\xi_{\mathrm{UI}}^\mathrm{tx}\right) $. In \eqref{SPAC_cascaded_prototype}, it can be seen that the column and row spaces of $ \widetilde{\bR}_\ell $ are the same as those of $ \ba_\mathrm{BS}\left(\nu_{\mathrm{IB}}^\mathrm{rx},\xi_{\mathrm{IB}}^\mathrm{rx}\right) $ and $ \ba_\mathrm{UE}^\mathrm{H}\left(\nu_{\mathrm{UI}}^\mathrm{tx},\xi_{\mathrm{UI}}^\mathrm{tx}\right) $, respectively. Therefore, we treat the left and right singular vectors corresponding to the largest singular value of the rank-one matrix $ \widehat{\bR}_\ell $ as its representative column and row. Based on the left and right singular vectors for $ \ell\in\mathbb{S}_\mathrm{IRS} $, we extract the spatial frequencies $ \left({\nu}_{\mathrm{IB}}^\mathrm{rx},{\xi}_{\mathrm{IB}}^\mathrm{rx}\right)$ and $\left({\nu}_{\mathrm{UI}}^\mathrm{tx},{\xi}_{\mathrm{UI}}^\mathrm{tx}\right)$. Focusing on the BS side and a specific $\ell  \in\mathbb{S}_{\mathrm{IRS}} $, the left singular vector $ \bu_\ell \in \mathbb{C}^{N_\mathrm{v}N_\mathrm{h}\times1} $ can be rearranged into a matrix by arranging the elements of $ \bu_\ell $ to follow the BS antenna numbering, which is similarly defined to that of the IRS in Fig. \ref{SPAC_IRS_index}. In other words, the rearranged matrix $ \boldsymbol{\mathcal{U}}_{\mathrm{BS},\ell} \in \mathbb{C}^{N_\mathrm{v}\times N_\mathrm{h}}$ can be defined as
\begin{align}\label{rearranged matrix}
	\boldsymbol{\mathcal{U}}_{\mathrm{BS},\ell} &= \begin{bmatrix}
		[\bu_\ell]_1 & \cdots & [\bu_\ell]_{N_\mathrm{h}} \\
		[\bu_\ell]_{N_\mathrm{h}+1} &  \cdots & [\bu_\ell]_{2N_\mathrm{h}} \\
		\vdots & \ddots & \vdots \\
		[\bu_\ell]_{(N_\mathrm{v}-1)N_\mathrm{h}+1} & \cdots & [\bu_\ell]_{N_\mathrm{v}N_\mathrm{h}}
	\end{bmatrix}.
\end{align}
We denote the $ x $-th column and $ y $-th row vectors of $ \boldsymbol{\mathcal{U}}_{\mathrm{BS},\ell} $ by $ \bu_{\ell,x}^\mathrm{v} \triangleq \boldsymbol{\mathcal{U}}_{\mathrm{BS},\ell}(:,x) $ and $ (\bu_{\ell,y}^\mathrm{h})^\mathrm{T} \triangleq \boldsymbol{\mathcal{U}}_{\mathrm{BS},\ell}(y,:) $.

As in \eqref{BS array response vector}, $  \ba_\mathrm{BS}\left(\nu_{\mathrm{IB}}^\mathrm{rx},\xi_{\mathrm{IB}}^\mathrm{rx}\right) $ is composed of the vertical and horizontal array response vectors. Based on the structure, the vertical spatial frequency $ \nu_{\mathrm{IB}}^\mathrm{rx} $ is estimated as
\begin{align}\label{SPAC_BS_ver}
	\widehat{\nu}_{\mathrm{IB}}^\mathrm{rx}=\frac{\sum\limits_{\ell \in \mathbb{S}_{\mathrm{IRS}}}\sum\limits_{y=2}^{N_\mathrm{v}}\boldsymbol{1}_{N_\mathrm{h}}^\mathrm{T} \left(\angle\left(\bu_{\ell,y}^\mathrm{h}\right)-\angle\left(\bu_{\ell,y-1}^\mathrm{h}\right)\right)}{ (L_\mathrm{v} + L_\mathrm{h} - 1)(N_\mathrm{v}-1)N_\mathrm{h}}.
\end{align}
Similarly, we estimate the horizontal spatial frequency $ {\xi}_{\mathrm{IB}}^\mathrm{rx} $ as
\begin{align}\label{SPAC_BS_hor}
	\widehat{\xi}_{\mathrm{IB}}^\mathrm{rx}=\frac{\sum\limits_{\ell \in \mathbb{S}_{\mathrm{IRS}}}\sum\limits_{x=2}^{N_\mathrm{h}}\boldsymbol{1}_{N_\mathrm{v}}^\mathrm{T} \left(\angle\left(\bu_{\ell,x}^\mathrm{v}\right)-\angle\left(\bu_{\ell,x-1}^\mathrm{v}\right)\right)}{ (L_\mathrm{v} + L_\mathrm{h} - 1)(N_\mathrm{h}-1)N_\mathrm{v}}.
\end{align}
In words, the estimates in \eqref{SPAC_BS_ver} and \eqref{SPAC_BS_hor} are the sample averages of spatial frequencies based on \eqref{rearranged matrix}. Now, the estimate of BS-side array response vector $ \ba_\mathrm{BS}\left({\nu}_{\mathrm{IB}}^\mathrm{rx},{\xi}_{\mathrm{IB}}^\mathrm{rx}\right) $ is reconstructed as in \eqref{BS array response vector} with the two estimated spatial frequencies $ \widehat{\nu}_{\mathrm{IB}}^\mathrm{rx}$ and $\widehat{\xi}_{\mathrm{IB}}^\mathrm{rx} $. Using the right singular vectors $ \bv_\ell $ for $ \ell \in \mathbb{S}_{\mathrm{IRS}} $, the UE-side array response vector is similarly estimated as $ \ba_\mathrm{UE}\left(\widehat{\nu}_{\mathrm{UI}}^\mathrm{tx},\widehat{\xi}_{\mathrm{UI}}^\mathrm{tx}\right) $ by deriving $ \widehat{\nu}_{\mathrm{UI}}^\mathrm{tx} $ and $ \widehat{\xi}_{\mathrm{UI}}^\mathrm{tx} $ as~in~\eqref{SPAC_BS_ver}~and~\eqref{SPAC_BS_hor}.

In Step 3, we define the effective two IRS-side spatial frequencies $ \nu_{\mathrm{IRS}} $ and $ \xi_{\mathrm{IRS}} $ as
	\begin{align}
		\nu_{\mathrm{IRS}} = \nu_{\mathrm{UI}}^\mathrm{rx}-\nu_{\mathrm{IB}}^\mathrm{tx}, \enspace
		\xi_{\mathrm{IRS}} = \xi_{\mathrm{UI}}^\mathrm{rx}-\xi_{\mathrm{IB}}^\mathrm{tx},
	\end{align}
	which are estimated instead of each of four spatial frequencies. To explain why this is possible, based on the single-path approximations in \eqref{SPAC_IB} and \eqref{SPAC_UI}, we have
\begingroup\allowdisplaybreaks
\begin{align}
	c_\ell &= \ba_\mathrm{BS}^\mathrm{H}\left(\nu_{\mathrm{IB}}^\mathrm{rx},\xi_{\mathrm{IB}}^\mathrm{rx}\right) \widetilde{\bR}_\ell\ba_\mathrm{UE}\left(\nu_{\mathrm{UI}}^\mathrm{tx},\xi_{\mathrm{UI}}^\mathrm{tx}\right)\notag\\
	&=\gamma_{\mathrm{IB}}\gamma_{\mathrm{UI}}[\ba_\mathrm{IRS}^\mathrm{H}\left(\nu_{\mathrm{IB}}^\mathrm{tx},\xi_{\mathrm{IB}}^\mathrm{tx}\right)]_\ell [\ba_\mathrm{IRS}\left(\nu_{\mathrm{UI}}^\mathrm{rx},\xi_{\mathrm{UI}}^\mathrm{rx}\right)]_\ell
	\notag\\	&=
	\gamma_{\mathrm{IB}}\gamma_{\mathrm{UI}}[\ba_\mathrm{IRS}^*\left(\nu_{\mathrm{IB}}^\mathrm{tx},\xi_{\mathrm{IB}}^\mathrm{tx}\right) \odot \ba_\mathrm{IRS}\left(\nu_{\mathrm{UI}}^\mathrm{rx},\xi_{\mathrm{UI}}^\mathrm{rx}\right)]_\ell,\label{SPAC_IRS}
\end{align}
for each $ \ell $. This clearly shows that we only need to estimate $ \ba_\mathrm{IRS}^*\left(\nu_{\mathrm{IB}}^\mathrm{tx},\xi_{\mathrm{IB}}^\mathrm{tx}\right) \odot \ba_\mathrm{IRS}\left(\nu_{\mathrm{UI}}^\mathrm{rx},\xi_{\mathrm{UI}}^\mathrm{rx}\right) $ to construct $ \widetilde{\bR}_\ell $ for $ \ell \notin \mathbb{S}_{\mathrm{IRS}} $ since $ \widetilde{\bR}_\ell = \ba_\mathrm{BS}\left(\nu_{\mathrm{IB}}^\mathrm{rx},\xi_{\mathrm{IB}}^\mathrm{rx}\right)c_\ell \ba_\mathrm{UE}^\mathrm{H}\left(\nu_{\mathrm{UI}}^\mathrm{tx},\xi_{\mathrm{UI}}^\mathrm{tx}\right)$ where the two array response vectors in the left and right are estimated in Step 2. The two gains $ \gamma_{\mathrm{IB}}$ and $\gamma_{\mathrm{UI}} $ in \eqref{SPAC_IRS} also need to be estimated, which will be handled in Step 4. Considering the structure of array response vector at the IRS side in \eqref{Approx_arv_IRS}, the Hadamard product of the two vectors in~\eqref{SPAC_IRS} is expressed as 
\begin{align}
	&\ba_\mathrm{IRS}^*\left(\nu_{\mathrm{IB}}^\mathrm{tx},\xi_{\mathrm{IB}}^\mathrm{tx}\right) \odot \ba_\mathrm{IRS}\left(\nu_{\mathrm{UI}}^\mathrm{rx},\xi_{\mathrm{UI}}^\mathrm{rx}\right)\notag \\
	= &\frac{1}{\sqrt{L}}\left(\begin{bmatrix}
		1 \\ \vdots \\ e^{-j(L_\mathrm{v}-1)\nu_{\mathrm{IB}}^\mathrm{tx}} 
	\end{bmatrix} \otimes \begin{bmatrix}
		1 \\ \vdots \\ e^{-j(L_\mathrm{h}-1)\xi_{\mathrm{IB}}^\mathrm{tx}}
	\end{bmatrix}  \right) 
	%%%%%%%%%%%%%%%%%%%%% two columns %%%%%%%%%%%%%%%%%%%%%%%%%%%
	\notag \\	&
	%%%%%%%%%%%%%%%%%%%%%% one column %%%%%%%%%%%%%%%%%%%%%%%%%%%
	%%%%%%%%%%%%%%%%%%%%%%%%%%%%%%%%%%%%%%%%%%%%%%%%%%%%%%%%%%%%%
	\odot \frac{1}{\sqrt{L}}\left(\begin{bmatrix}
		1 \\ \vdots \\ e^{j(L_\mathrm{v}-1)\nu_{\mathrm{UI}}^\mathrm{rx}} 
	\end{bmatrix} \otimes \begin{bmatrix}
		1 \\ \vdots \\ e^{j(L_\mathrm{h}-1)\xi_{\mathrm{UI}}^\mathrm{rx}}
	\end{bmatrix}  \right) \notag\\
	\stackrel{(\mathrm{a})}{=}&\frac{1}{\sqrt{L}}\left(\begin{bmatrix}
		1 \\ \vdots \\ e^{-j(L_\mathrm{v}-1)\nu_{\mathrm{IB}}^\mathrm{tx}} 
	\end{bmatrix} \odot \begin{bmatrix}
		1 \\ \vdots \\ e^{j(L_\mathrm{v}-1)\nu_{\mathrm{UI}}^\mathrm{rx}} 
	\end{bmatrix}  \right) 
	%%%%%%%%%%%%%%%%%%%%% two columns %%%%%%%%%%%%%%%%%%%%%%%%%%%
	\notag \\	&
	%%%%%%%%%%%%%%%%%%%%%% one column %%%%%%%%%%%%%%%%%%%%%%%%%%%
	%%%%%%%%%%%%%%%%%%%%%%%%%%%%%%%%%%%%%%%%%%%%%%%%%%%%%%%%%%%%%
	\otimes \frac{1}{\sqrt{L}}\left(\begin{bmatrix}
		1 \\ \vdots \\ e^{-j(L_\mathrm{h}-1)\xi_{\mathrm{IB}}^\mathrm{tx}}
	\end{bmatrix} \odot \begin{bmatrix}
		1 \\ \vdots \\ e^{j(L_\mathrm{h}-1)\xi_{\mathrm{UI}}^\mathrm{rx}}
	\end{bmatrix}  \right) \notag\\
	=& \frac{1}{\sqrt{L}}\left[1,\cdots,e^{j(L_\mathrm{v}-1)\left(\nu_{\mathrm{UI}}^\mathrm{rx}-\nu_{\mathrm{IB}}^\mathrm{tx}\right)} \right]^\mathrm{T} 
	%%%%%%%%%%%%%%%%%%%%% two columns %%%%%%%%%%%%%%%%%%%%%%%%%%%
	\notag \\	&
	%%%%%%%%%%%%%%%%%%%%%% one column %%%%%%%%%%%%%%%%%%%%%%%%%%%
	%%%%%%%%%%%%%%%%%%%%%%%%%%%%%%%%%%%%%%%%%%%%%%%%%%%%%%%%%%%%%
	\otimes\frac{1}{\sqrt{L}}\left[ 1,\cdots,e^{j(L_\mathrm{h}-1)(\xi_{\mathrm{UI}}^\mathrm{rx}-\xi_{\mathrm{IB}}^\mathrm{tx})} \right]^\mathrm{T} \notag\\
	=& \frac{1}{\sqrt{L}}\ba_\mathrm{IRS}\left(\nu_{\mathrm{UI}}^\mathrm{rx}-\nu_{\mathrm{IB}}^\mathrm{tx},\xi_{\mathrm{UI}}^\mathrm{rx}-\xi_{\mathrm{IB}}^\mathrm{tx}\right)\notag\\
	=& \frac{1}{\sqrt{L}}\ba_\mathrm{IRS}\left(\nu_{\mathrm{IRS}},\xi_{\mathrm{IRS}}\right), \label{SPAC_Step 3}
\end{align}
\endgroup
where (a) is based on the property that $ \left(\bA \otimes \bB\right) \odot \left(\bC \otimes \bD\right) =\left(\bA \odot \bC\right) \otimes \left(\bB \odot \bD\right) $. This implies that only the two effective spatial frequencies $ \nu_{\mathrm{IRS}} $ and $ \xi_{\mathrm{IRS}} $ are needed to construct $ \ba_\mathrm{IRS}^*\left(\nu_{\mathrm{IB}}^\mathrm{tx},\xi_{\mathrm{IB}}^\mathrm{tx}\right) \odot \ba_\mathrm{IRS}\left(\nu_{\mathrm{UI}}^\mathrm{rx},\xi_{\mathrm{UI}}^\mathrm{rx}\right) $.

To estimate the two spatial frequencies $ \nu_{\mathrm{IRS}} $ and $ \xi_{\mathrm{IRS}} $, we can exploit the actual observation of $ c_\ell $ in the form of ${\widehat{c}_\ell \triangleq \ba_\mathrm{BS}^\mathrm{H}\left(\widehat{\nu}_{\mathrm{IB}}^\mathrm{rx},\widehat{\xi}_{\mathrm{IB}}^\mathrm{rx}\right) \widehat{\bR}_\ell\ba_\mathrm{UE}\left(\widehat{\nu}_{\mathrm{UI}}^\mathrm{tx},\widehat{\xi}_{\mathrm{UI}}^\mathrm{tx}\right)} $ for $ \ell \in \mathbb{S}_{\mathrm{IRS}} $ with the parameters obtained in Steps 1 and 2. For $ L_\mathrm{v} $ observations of $ \widehat{c}_\ell $ for $ \ell \in \mathbb{S}_{\mathrm{IRS},1}^{\mathrm{v}} $, the estimated vertical spatial frequency $ \widehat{\nu}_\mathrm{IRS} $~is
\begin{align}
	\widehat{\nu}_\mathrm{IRS}=\frac{1}{L_\mathrm{v}-1} \sum_{\substack{\ell \neq 1 \\ \ell\in \mathbb{S}_{\mathrm{IRS},1}^{\mathrm{v}}}}\angle \left(\frac{\widehat{c}_{\ell}}{\widehat{c}_{\ell-1}}\right).
\end{align}
For $ L_\mathrm{h} $ observations for $ \ell \in \mathbb{S}_{\mathrm{IRS},1}^{\mathrm{h}} $, the horizontal spatial frequency is estimated as
\begin{align}
	\widehat{\xi}_\mathrm{IRS}=\frac{1}{L_\mathrm{h}-1} \sum_{\substack{\ell \neq 1 \\ \ell\in \mathbb{S}_{\mathrm{IRS},1}^{\mathrm{h}}}}\angle \left(\frac{\widehat{c}_{\ell}}{\widehat{c}_{\ell-1}}\right).
\end{align}
With the estimated spatial frequencies, the BS constructs the IRS-side array~response vector~in~\eqref{SPAC_Step 3}.

The overall gain $ \gamma_\mathrm{IRS} \triangleq {\gamma}_{\mathrm{IB}}{\gamma}_{\mathrm{UI}} $ is estimated in Step 4 instead of each gain separately. Using \eqref{SPAC_IRS} and \eqref{SPAC_Step 3} in Step~3, the overall gain can be directly given as
\begin{align}
	\label{SPAC_gain}
	{\gamma}_{\mathrm{IRS}}&=\frac{{c}_\ell}{\frac{1}{\sqrt{L}}[\ba_{\mathrm{IRS}}\left({\nu}_\mathrm{IRS},{\xi}_\mathrm{IRS}\right)]_\ell}
	%%%%%%%%%%%%%%%%%%%%% two columns %%%%%%%%%%%%%%%%%%%%%%%%%%%
	\notag\\	&
	%%%%%%%%%%%%%%%%%%%%%% one column %%%%%%%%%%%%%%%%%%%%%%%%%%%
	%%%%%%%%%%%%%%%%%%%%%%%%%%%%%%%%%%%%%%%%%%%%%%%%%%%%%%%%%%%%%
	=\frac {\gamma_{\mathrm{IB}}\gamma_{\mathrm{UI}}[\ba_\mathrm{IRS}^*\left(\nu_{\mathrm{IB}}^\mathrm{tx},\xi_{\mathrm{IB}}^\mathrm{tx}\right) \odot \ba_\mathrm{IRS}\left(\nu_{\mathrm{UI}}^\mathrm{rx},\xi_{\mathrm{UI}}^\mathrm{rx}\right)]_\ell }{\frac{1}{\sqrt{L}}[\ba_{\mathrm{IRS}}\left({\nu}_\mathrm{IRS},{\xi}_\mathrm{IRS}\right)]_\ell}
	%%%%%%%%%%%%%%%%%%%%% two columns %%%%%%%%%%%%%%%%%%%%%%%%%%%
	\notag\\	&
	%%%%%%%%%%%%%%%%%%%%%% one column %%%%%%%%%%%%%%%%%%%%%%%%%%%
	%%%%%%%%%%%%%%%%%%%%%%%%%%%%%%%%%%%%%%%%%%%%%%%%%%%%%%%%%%%%%
	=\gamma_{\mathrm{IB}}\gamma_{\mathrm{UI}},
\end{align}
which can be obtained for any $ \ell \in \mathbb{S}_{\mathrm{IRS}} $. By utilizing $ \widehat{c}_\ell $, $ \widehat{\nu}_\mathrm{IRS} $, and $ \widehat{\xi}_\mathrm{IRS} $ for $ \ell \in \mathbb{S}_{\mathrm{IRS}} $ obtained in Step 3, $ L_\mathrm{v} + L_\mathrm{h} - 1 $ observations of $ {\gamma}_{\mathrm{IRS}} $ can be computed as in \eqref{SPAC_gain}. Based on the observations, the effective gain is estimated as
\begin{align}
	\widehat{\gamma}_{\mathrm{IRS}} = \frac{1}{ L_\mathrm{v} + L_\mathrm{h} - 1} \sum_{\ell \in \mathbb{S}_{\mathrm{IRS}}} \frac{\widehat{c}_\ell}{\frac{1}{\sqrt{L}}\left[\ba_\mathrm{IRS}\left(\widehat{\nu}_\mathrm{IRS},\widehat{\xi}_\mathrm{IRS}\right)\right]_\ell}.
\end{align}
Now we can reconstruct the remaining rank-one matrices $ \widehat{\bR}_\ell $ for $ \ell \notin \mathbb{S}_{\mathrm{IRS}} $ by using all estimated parameters as
\begin{align}
	\label{SPAC_cascaded}
	%%%%%%%%%%%%%%%%%%%%% two columns %%%%%%%%%%%%%%%%%%%%%%%%%%%
	&\widehat{\bR}_\ell \notag\\	
	=& \frac{\widehat{\gamma}_\mathrm{IRS}}{\sqrt{L}} \ba_\mathrm{BS}\left(\widehat{\nu}_{\mathrm{IB}}^\mathrm{rx},\widehat{\xi}_{\mathrm{IB}}^\mathrm{rx}\right)\left[\ba_\mathrm{IRS}\left(\widehat{\nu}_\mathrm{IRS},\widehat{\xi}_\mathrm{IRS}\right)\right]_\ell \ba_\mathrm{UE}^\mathrm{H}\left(\widehat{\nu}_{\mathrm{UI}}^\mathrm{tx},\widehat{\xi}_{\mathrm{UI}}^\mathrm{tx}\right).
	%%%%%%%%%%%%%%%%%%%%%% one column %%%%%%%%%%%%%%%%%%%%%%%%%%%
%	\widehat{\bR}_\ell 	= \frac{\widehat{\gamma}_\mathrm{IRS}}{\sqrt{L}} \ba_\mathrm{BS}\left(\widehat{\nu}_{\mathrm{IB}}^\mathrm{rx},\widehat{\xi}_{\mathrm{IB}}^\mathrm{rx}\right)[\ba_\mathrm{IRS}\left(\widehat{\nu}_\mathrm{IRS},\widehat{\xi}_\mathrm{IRS}\right)]_\ell \ba_\mathrm{UE}^\mathrm{H}\left(\widehat{\nu}_{\mathrm{UI}}^\mathrm{tx},\widehat{\xi}_{\mathrm{UI}}^\mathrm{tx}\right).
	%%%%%%%%%%%%%%%%%%%%%%%%%%%%%%%%%%%%%%%%%%%%%%%%%%%%%%%%%%%%%
\end{align}

\begingroup\allowdisplaybreaks
As the rank-one matrix estimation utilizing uplink signaling is conducted only for $ \ell \in \mathbb{S}_{\mathrm{IRS}} $, the training overhead for SPAC is $ \tau_\mathrm{c}=\tau_{\mathrm{SPAC}}=(L_\mathrm{v} + L_\mathrm{h} - 1)M $. Compared with the training overhead of the OBO estimation $ \tau_{\mathrm{OBO}}=LM=L_\mathrm{v}L_\mathrm{h}M $, the overhead of SPAC is remarkably low especially with large $ L $. With the single-path channel approximation, SPAC substitutes the problem of large dimensional channel estimation into that of the small number of parameter estimations, contributing to low training overhead.

\subsection{Selective emphasis on rank-one matrices (SEROM)}
\label{SEROM}
SEROM is proposed to conduct efficient channel estimation with the design of IRS reflection-coefficient {matrices}. Different from the previous techniques, SEROM always turns on the entire IRS elements and utilizes both the IRS phase shifts and uplink signaling for channel estimation. We first denote the IRS reflection-coefficient matrix for the $ q $-th training period by $ \boldsymbol{\Phi}^{(q)} $, which is defined as
\endgroup
\begin{align}
	\boldsymbol{\Phi}^{(q)} = \diag\left(\left[e^{j\phi_1^{(q)}},\cdots,e^{j\phi_L^{(q)}}\right]^{\mathrm{T}}\right),
\end{align}
with $ q\in\{1,\cdots,Q\} $ where $ Q $ is the total number of training periods. The IRS reflection-coefficient matrix $ \boldsymbol{\Phi}[t] $ is fixed as $ \boldsymbol{\Phi}^{(q)} $ during the $ q $-th training period $ \tau_{\mathrm{d}}+(q-1)M+1 \leq t \leq \tau_{\mathrm{d}}+qM $. The length of each training period for SEROM is $ M $, which is equal for the OBO estimation and SPAC. However, the number of training periods for the two previous techniques is $ L $ and  $ L_\mathrm{v} + L_\mathrm{h}-1 $, and it implies that their training overhead depends on the number of IRS elements. SEROM can adapt the training overhead flexibly since $ Q $ is the adjustable parameter independent \red{of a} system structure.

As in Section \ref{Tech1}, the UE transmits the length $ M $ training sequence with $ s_\mathrm{UL}[t]=\sqrt{P_\mathrm{UL}} $ and exploits the normalized $ M\times M $ DFT matrix as the training beamformer $ \bF_{\mathrm{UIB},q} =[\bff[\tau_{\mathrm{d}}+(q-1)M+1],\cdots,\bff[\tau_{\mathrm{d}}+qM]] $ for~each~$ q $.
\begingroup\allowdisplaybreaks
The BS processes the $ M $ received signals as
\begin{align}
	\label{SEROM_ZF}
	&\frac{1}{\sqrt{P_\mathrm{UL}}}\widetilde{\bY}_{\mathrm{UIB},q}\bF_{\mathrm{UIB},q}^\mathrm{H} 
	%%%%%%%%%%%%%%%%%%%%% two columns %%%%%%%%%%%%%%%%%%%%%%%%%%%
	\notag\\	=& 
	%%%%%%%%%%%%%%%%%%%%%% one column %%%%%%%%%%%%%%%%%%%%%%%%%%%
%	=
	%%%%%%%%%%%%%%%%%%%%%%%%%%%%%%%%%%%%%%%%%%%%%%%%%%%%%%%%%%%%%
	\bH_\mathrm{IB}\boldsymbol{\Phi}^{(q)}\bH_\mathrm{UI} +\frac{1}{\sqrt{P_\mathrm{UL}}}\widetilde{\bN}_{\mathrm{UIB},q}\bF_{\mathrm{UIB},q}^\mathrm{H},
	%\frac{\widetilde{\bY}_{\mathrm{UIB},q}\bF_{\mathrm{UIB},q}^\mathrm{H}}{\sqrt{P_\mathrm{UL}}}
	%= \bH_\mathrm{IB}\boldsymbol{\Phi}^{(q)}\bH_\mathrm{UI} +\frac{\widetilde{\bN}_{\mathrm{UIB},q}\bF_{\mathrm{UIB},q}^\mathrm{H}}{\sqrt{P_\mathrm{UL}}},
\end{align}
for the $ q $-th training period. Recalling that $ \bH_\mathrm{IB}\boldsymbol{\Phi}^{(q)}\bH_\mathrm{UI} = \sum_{\ell=1}^L e^{j\phi_\ell^{(q)}}\bR_\ell  $ as in \eqref{one_rank_sum}, the cascaded UE-IRS-BS channel can be expressed as
\begin{align}
	\bH_\mathrm{IB}\boldsymbol{\Phi}^{(q)}\bH_\mathrm{UI}
	=&
	\begin{bmatrix}
		e^{j\phi_1^{(q)}}\bI_N&\cdots&e^{j\phi_L^{(q)}}\bI_N 
		%e^{j\phi_1^{(q)}}\bI_N&e^{j\phi_2^{(q)}}\bI_N&\cdots&e^{j\phi_L^{(q)}}\bI_N 
	\end{bmatrix}
	\begin{bmatrix}
		\bR_1 \\ \vdots \\ \bR_L
		%\bR_1 \\ \bR_2 \\ \vdots \\ \bR_L
	\end{bmatrix} 
	%%%%%%%%%%%%%%%%%%%%% two columns %%%%%%%%%%%%%%%%%%%%%%%%%%%
	\notag\\	=&
	%%%%%%%%%%%%%%%%%%%%%% one column %%%%%%%%%%%%%%%%%%%%%%%%%%%
%	=
	%%%%%%%%%%%%%%%%%%%%%%%%%%%%%%%%%%%%%%%%%%%%%%%%%%%%%%%%%%%%%
	\left( \begin{bmatrix}
		e^{j\phi_1^{(q)}}&\cdots&e^{j\phi_L^{(q)}}
		%e^{j\phi_1^{(q)}}&e^{j\phi_2^{(q)}}&\cdots&e^{j\phi_L^{(q)}}
	\end{bmatrix}\otimes \bI_N \right)
	\begin{bmatrix}
		\bR_1 \\ \vdots \\ \bR_L
		%\bR_1 \\ \bR_2 \\ \vdots \\ \bR_L
	\end{bmatrix}.\label{SEROM_cascaded}
\end{align}
Then, we can stack the cascaded channel through the IRS $ \bH_\mathrm{IB}\boldsymbol{\Phi}^{(q)}\bH_\mathrm{UI} $ as
\endgroup
\begin{align}\label{Omega}
	&\begin{bmatrix}
		\bH_\mathrm{IB}\boldsymbol{\Phi}^{(1)}\bH_\mathrm{UI} \\ \vdots \\ \bH_\mathrm{IB}\boldsymbol{\Phi}^{(Q)}\bH_\mathrm{UI}
		%\bH_\mathrm{IB}\boldsymbol{\Phi}^{(1)}\bH_\mathrm{UI} \\ \bH_\mathrm{IB}\boldsymbol{\Phi}^{(2)}\bH_\mathrm{UI} \\ \vdots \\ \bH_\mathrm{IB}\boldsymbol{\Phi}^{(Q)}\bH_\mathrm{UI}
	\end{bmatrix}
	%%%%%%%%%%%%%%%%%%%%% two columns %%%%%%%%%%%%%%%%%%%%%%%%%%%
	\notag\\		=& 
	%%%%%%%%%%%%%%%%%%%%%% one column %%%%%%%%%%%%%%%%%%%%%%%%%%%
%	=
	%%%%%%%%%%%%%%%%%%%%%%%%%%%%%%%%%%%%%%%%%%%%%%%%%%%%%%%%%%%%%
	\left(
	\underbrace{\begin{bmatrix}
			e^{j\phi_1^{(1)}}&\cdots&e^{j\phi_L^{(1)}} \\
			\vdots& \ddots& \vdots \\
			e^{j\phi_1^{(Q)}}&\cdots&e^{j\phi_L^{(Q)}} \\
			%	e^{j\phi_1^{(1)}}&e^{j\phi_2^{(1)}}&\cdots&e^{j\phi_L^{(1)}} \\
			%	e^{j\phi_1^{(2)}}&e^{j\phi_2^{(2)}}&\cdots&e^{j\phi_L^{(2)}} \\
			%	\vdots& \vdots& \ddots& \vdots \\
			%	e^{j\phi_1^{(Q)}}&e^{j\phi_2^{(Q)}}&\cdots&e^{j\phi_L^{(Q)}} \\
	\end{bmatrix}}_{{\triangleq \boldsymbol{\Omega}}}\otimes\text{ } \bI_N \right)
	\begin{bmatrix}
		\bR_1 \\ \vdots \\ \bR_L
		%\bR_1 \\ \bR_2 \\ \vdots \\ \bR_L
	\end{bmatrix},
\end{align}
where $ \boldsymbol{\Omega} \in \mathbb{C}^{Q\times L} $ is the IRS training matrix, whose elements are unit modulus.
%\endgroup

The IRS training matrix $ \boldsymbol{\Omega} $ in \eqref{Omega} can be designed to have mutually orthogonal columns for the product $ \big(\boldsymbol{\Omega}^\mathrm{H}\otimes \bI_N\big)\big(\boldsymbol{\Omega}\otimes \bI_N\big)=\big(\boldsymbol{\Omega}^\mathrm{H}\boldsymbol{\Omega}\otimes \bI_N\big) $ to be a non-zero diagonal matrix. This condition facilitates perfect extraction of the rank-one matrices from the stacked cascaded channel in \eqref{Omega}. However, it is feasible only when the number of training periods $ Q $ is larger than or equal to the number of the IRS elements $ L $. For large $ L $, which is typical for IRS-empowered systems, a number of training periods are required to satisfy such orthogonality, and this motivates us to design the IRS training matrix under the condition $ Q < L $.

Since it is impossible to make the columns of $ \boldsymbol{\Omega} $ mutually orthogonal for $ Q < L $, we design the IRS training matrix to have pseudo-orthogonal columns as
\begin{align}
	(\boldsymbol{\Omega}(:,\ell))^{\mathrm{H}}\boldsymbol{\Omega}(:,k) = \begin{cases}
		a_\ell, & \text{for } \ell=k, \\
		b_{\ell,k}, & \text{otherwise},
	\end{cases}
\end{align}
satisfying $ \lvert a_\ell \rvert \gg \lvert b_{\ell,k} \rvert $ for all $ \ell $ and $ k $. To design such $ \boldsymbol{\Omega} $, we can employ a submatrix by choosing $ Q $ rows for $ Q<L $ or $ L $ columns for $ Q\geq L $ from the $ \cM \times \cM $ DFT matrix where $ \cM = \max\{Q,L\} $. The BS finally conducts the rank-one matrix estimation~as
\begin{align}\label{SEROM_one_rank}
	%\begin{bmatrix}
	%\widehat{\bR}_1 \\ \widehat{\bR}_2 \\ \vdots \\ \widehat{\bR}_L
	%\end{bmatrix} 
	%=&\frac{A}{\sqrt{P_\mathrm{UL}}}\left(\boldsymbol{\Omega}^\mathrm{H}\otimes \bI_N\right)
	%\begin{bmatrix}\widetilde{\bY}_{\mathrm{UIB},1}\bF_{\mathrm{UIB},1}^\mathrm{H} \\
	%\widetilde{\bY}_{\mathrm{UIB},2}\bF_{\mathrm{UIB},2}^\mathrm{H} \\ \vdots \\
	%\widetilde{\bY}_{\mathrm{UIB},Q}\bF_{\mathrm{UIB},Q}^\mathrm{H}
	%\end{bmatrix}, \\
	%=& A\left(
	%\begin{bmatrix}
	%a_1&b_{1,2}&\cdots&b_{1,L} \\
	%b_{2,1}&a_2&\cdots&b_{2,L} \\
	%\vdots& \vdots& \ddots& \vdots \\
	%b_{L,1}&b_{L,2}&\cdots&a_L \\
	%\end{bmatrix}\otimes \bI_N \right)
	%\begin{bmatrix}
	%\bR_1 \\ \bR_2 \\ \vdots \\ \bR_L
	%\end{bmatrix} \notag \\
	%&+ \frac{A}{\sqrt{P_\mathrm{UL}}}\left(\boldsymbol{\Omega}^\mathrm{H}\otimes \bI_N\right)
	%\begin{bmatrix}\widetilde{\bN}_{\mathrm{UIB},1}\bF_{\mathrm{UIB},1}^\mathrm{H} \\
	%\widetilde{\bN}_{\mathrm{UIB},2}\bF_{\mathrm{UIB},2}^\mathrm{H} \\ \vdots \\
	%\widetilde{\bN}_{\mathrm{UIB},Q}\bF_{\mathrm{UIB},Q}^\mathrm{H}
	%\end{bmatrix},
	\begin{bmatrix}
		\widehat{\bR}_1 \\ \vdots \\ \widehat{\bR}_L
	\end{bmatrix} 
	=&\frac{A}{\sqrt{P_\mathrm{UL}}}\left(\boldsymbol{\Omega}^\mathrm{H}\otimes \bI_N\right)
	\begin{bmatrix}\widetilde{\bY}_{\mathrm{UIB},1}\bF_{\mathrm{UIB},1}^\mathrm{H}\\ \vdots \\
		\widetilde{\bY}_{\mathrm{UIB},Q}\bF_{\mathrm{UIB},Q}^\mathrm{H}
	\end{bmatrix} \notag\\
	=& A\left(
	\begin{bmatrix}
		a_1&\cdots&b_{1,L} \\
		\vdots& \ddots& \vdots \\
		b_{L,1}&\cdots&a_L \\
	\end{bmatrix}\otimes \bI_N \right)
	\begin{bmatrix}
		\bR_1 \\ \vdots \\ \bR_L
	\end{bmatrix} 
	%%%%%%%%%%%%%%%%%%%%% two columns %%%%%%%%%%%%%%%%%%%%%%%%%%%
	\notag \\		&
	%%%%%%%%%%%%%%%%%%%%%% one column %%%%%%%%%%%%%%%%%%%%%%%%%%%
	%%%%%%%%%%%%%%%%%%%%%%%%%%%%%%%%%%%%%%%%%%%%%%%%%%%%%%%%%%%%%
	+ \frac{A}{\sqrt{P_\mathrm{UL}}}\left(\boldsymbol{\Omega}^\mathrm{H}\otimes \bI_N\right)
	\begin{bmatrix}\widetilde{\bN}_{\mathrm{UIB},1}\bF_{\mathrm{UIB},1}^\mathrm{H} \\ \vdots \\
		\widetilde{\bN}_{\mathrm{UIB},Q}\bF_{\mathrm{UIB},Q}^\mathrm{H}
	\end{bmatrix}.
\end{align}
The normalization factor $A$ to cancel the amplification effect of $ \boldsymbol{\Omega}^{\mathrm{H}}\boldsymbol{\Omega} $ is defined as
\begingroup\allowdisplaybreaks
\begin{align}
	A &= \frac{\sum_{l=1}^L (\boldsymbol{\Omega}(:,\ell))^{\mathrm{H}}\boldsymbol{\Omega}(:,\ell)}{Q\sum_{l=1}^L \lvert \sum_{k=1}^L (\boldsymbol{\Omega}(:,\ell))^{\mathrm{H}}\boldsymbol{\Omega}(:,k) \rvert}
	%%%%%%%%%%%%%%%%%%%%% two columns %%%%%%%%%%%%%%%%%%%%%%%%%%%
	\notag\\	&
	%%%%%%%%%%%%%%%%%%%%%% one column %%%%%%%%%%%%%%%%%%%%%%%%%%%
	%%%%%%%%%%%%%%%%%%%%%%%%%%%%%%%%%%%%%%%%%%%%%%%%%%%%%%%%%%%%%
	= \frac{\sum_{l=1}^L a_\ell}{Q\sum_{l=1}^L \lvert a_\ell + \sum_{k\neq l} b_{\ell,k} \rvert}
	%%%%%%%%%%%%%%%%%%%%% two columns %%%%%%%%%%%%%%%%%%%%%%%%%%%
	\notag\\	&
	%%%%%%%%%%%%%%%%%%%%%% one column %%%%%%%%%%%%%%%%%%%%%%%%%%%
	%%%%%%%%%%%%%%%%%%%%%%%%%%%%%%%%%%%%%%%%%%%%%%%%%%%%%%%%%%%%%
	= \frac{L}{\sum_{l=1}^L \lvert Q + \sum_{k\neq l} b_{\ell,k} \rvert}, \label{Normalization factor}
\end{align}
where $ a_\ell = Q $ holds for all $ \ell $ since the entire IRS elements are turned on with the unit modulus constraint. For $ {Q \geq L} $, the $ L $ columns of $ {Q \times Q} $ DFT matrix can be chosen to give $ {b_{\ell,k}=0} $ and $ {A=1/Q} $. For $ Q<L $, the $ Q $-row submatrix from the $ {\cM \times \cM} $ DFT matrix can be chosen to satisfy $ {\lvert a_\ell \rvert \gg \lvert b_{\ell,k} \rvert} $ and $ {A\approx 1/Q} $.

The overall training overhead of SEROM is $ \tau_\mathrm{c}=\tau_\mathrm{SEROM}= QM $. Note that $ \tau_\mathrm{SEROM} $ is independent from the number of IRS elements $ L $. For the small number of the IRS elements, we can take $ Q\geq L $ with moderate training overhead, and the IRS training matrix with $ A=1/Q $ ensures perfect rank-one matrix estimation in \eqref{SEROM_one_rank} at noiseless circumstance. However, keeping the condition $ Q\geq L $ makes the minimum length of training sequences proportional to $ L $, which is undesirable for typical IRS-empowered systems adopting large $ L $. In this case, we can set $ Q < L $ or even $ Q \ll L $ to suppress the training overhead in a moderate range.
\endgroup

% % % % % % % % % % % % % % %
% % % % % % % % % % % % % % %
% % % % % % % % % % % % % % %

\section{IRS Phase Shift Design} \label{IRS Beamforming Techniques}
The considered IRS-empowered SU-MIMO system is intended to serve the~UE with high spectral efficiency through the support of the IRS. In this section, we propose a novel phase shift design at the IRS to achieve high spectral efficiency. It can be shown that the proposed design gives an optimal phase shift that maximizes the spectral efficiency for each IRS element while the phase shifts of other IRS elements are fixed. In addition, all the processes require only basic linear matrix operations making the proposed design practical. We first assume perfect channel information at the BS for conceptual explanation. Then, for the numerical results in Section~\ref{numerical results}, we examine the proposed phase shift design with the perfect channel information and also with the estimated channels by the proposed techniques in Section~\ref{Channel Estimation Techniques}.  

\subsection{Optimal phase shift for each IRS element}
Relying on the downlink and uplink channel reciprocity in TDD \cite{Larsson:2014_TDD}, we take the conjugate transpose to represent the total downlink channel $\bH_\mathrm{tot}^\mathrm{H}$ where the total channel $\bH_\mathrm{tot}$ is represented~by
\begin{align}
	\bH_{\mathrm{tot}}=\bH_\mathrm{UB}+\bH_{\mathrm{IB}}\boldsymbol{\Phi}[t]\bH_{\mathrm{UI}}=\bH_\mathrm{UB}+\sum_{\ell=1}^{L}e^{j \phi_{\ell}[t]} \bR_{\ell}. \label{total downlink channel}
\end{align}
Then, the downlink spectral efficiency $R_\mathrm{DL}$ is given as \cite{spectral_efficiency} 
\begin{align}
	R_\mathrm{DL}&=\log_2\left( \det\left(\bI_r + \frac{P_\mathrm{DL}}{r N_0}\bW^\mathrm{H}\bH_\mathrm{tot}\bH_\mathrm{tot}^\mathrm{H}\bW\right)\right), \label{spectral efficiency} 
\end{align}
where $r$ is the rank of total downlink channel $\bH_\mathrm{tot}^\mathrm{H}$, and $\bW\in\mathbb{C}^{N \times r}$ is the downlink transmit beamformer at the BS. Since $\boldsymbol{\Phi}[t]$ is designed based on given channels, $\boldsymbol{\Phi}[t]$~is fixed during the data transmissions, omitting the time index $t$ as $\boldsymbol{\Phi}$. We turn on all the IRS elements, i.e., $\boldsymbol{\Phi}=\diag\left(\left[e^{j \phi_1},\cdots,e^{j\phi_L}\right]^\mathrm{T}\right)$, to maximize the reflected signal strengths. 

With the given $\boldsymbol{\Phi}$ and $\bH_\mathrm{tot}^\mathrm{H}$, the beamformer $\bW$ is given as the dominant $r$ right singular vectors of $\bH_\mathrm{tot}^\mathrm{H}$ as \cite{MIMO_SVD_1,MIMO_SVD_2}
\begin{align}
	\bW&=\bV_\mathrm{tot}(:,1:r), \label{transmit beamformer} \\
	\bH_\mathrm{tot}^\mathrm{H}&=\bU_\mathrm{tot} \boldsymbol{\Sigma}_\mathrm{tot}\bV_\mathrm{tot}^\mathrm{H}, \label{svd}
\end{align}
where (\ref{svd}) is the singular value decomposition (SVD) of $\bH_\mathrm{tot}^\mathrm{H}$. 
On one hand, $\bH_{\mathrm{tot}}^\mathrm{H}$ contains $\boldsymbol{\Phi}$ as in (\ref{total downlink channel}), which let $\bW$ depend on $\boldsymbol{\Phi}$. On the other hand, the design of $\boldsymbol{\Phi}$ that is to maximize $R_\mathrm{DL}$ in (\ref{spectral efficiency}) also depends on $\bW$. This entangled correlation of $\boldsymbol{\Phi}$ and $\bW$ makes it difficult to jointly design the optimal $\boldsymbol{\Phi}$ and $\bW$. Hence, we first reformulate $R_{\mathrm{DL}}$ in~(\ref{spectral efficiency}) to decompose the design of $\boldsymbol{\Phi}$ and $\bW$ by exploiting the property between $\bH_\mathrm{tot}^\mathrm{H}$ and $\bW$ in (\ref{transmit beamformer}) as
\begin{align}
	R_\mathrm{DL}=&\log_2\bigg(\det\bigg(\bI_r + \frac{P_\mathrm{DL}}{r N_0}\bV_\mathrm{tot}^\mathrm{H}(:,1:r)\bH_\mathrm{tot} 
	%%%%%%%%%%%%%%%%%%%%% two columns %%%%%%%%%%%%%%%%%%%%%%%%%%%
	\notag \\	&
	%%%%%%%%%%%%%%%%%%%%%% one column %%%%%%%%%%%%%%%%%%%%%%%%%%%
	%%%%%%%%%%%%%%%%%%%%%%%%%%%%%%%%%%%%%%%%%%%%%%%%%%%%%%%%%%%%%
	\times\bH_\mathrm{tot}^\mathrm{H}\bV_\mathrm{tot}(:,1:r)\bigg)\bigg) \notag \\
	\stackrel{(\mathrm{a})}{=}&\log_2\left( \det\left(\bI_N + \frac{P_\mathrm{DL}}{r N_0}\bV_\mathrm{tot}^\mathrm{H}\bH_\mathrm{tot}\bH_\mathrm{tot}^\mathrm{H}\bV_\mathrm{tot}\right)\right) \notag \\  
	%	\label{spectral efficiency with unitary beamformer} 
	=&\log_2\left( \det\left(\bV_\mathrm{tot}^\mathrm{H}\left(\bI_N + \frac{P_\mathrm{DL}}{r N_0}\bH_\mathrm{tot}\bH_\mathrm{tot}^\mathrm{H}\right)\bV_\mathrm{tot}\right)\right) \notag \\
	\stackrel{(\mathrm{b})}{=}&\log_2\left( \det\left(\bI_N + \frac{P_\mathrm{DL}}{r N_0}\bH_\mathrm{tot}\bH_\mathrm{tot}^\mathrm{H}\right)\right), \label{spectral efficiecny w/o beamformer}
\end{align}
where (a) holds since the rank of $\bH_{\mathrm{tot}}^\mathrm{H}$ is given by $r$, and (b) holds with the fact that $\det(\bA\bB)=\det(\bA)\det(\bB)$ for any square matrices $\bA$ and $\bB$ with the same dimension and that $\bV_\mathrm{tot}$ is a unitary matrix. 
%The property of $\bW$ in~(\ref{transmit beamformer}) does an important role to derive (\ref{spectral efficiecny w/o beamformer}); however, 
The reformulated spectral efficiency $R_\mathrm{DL}$ in (\ref{spectral efficiecny w/o beamformer}) is independent from the specific value of $\bW$. This allows to design $\boldsymbol{\Phi}$ first to maximize $R_\mathrm{DL}$. Then, $\bW$ can be designed as in (\ref{transmit beamformer}) with the designed~$\boldsymbol{\Phi}$ and downlink channel~$\bH_{\mathrm{tot}}^\mathrm{H}$.

\begingroup\allowdisplaybreaks
To get the optimal value of the $\ell$-th phase shift $\phi_\ell$ that maximizes $R_\mathrm{DL}$ in (\ref{spectral efficiecny w/o beamformer}) for given $\{\phi_k\}_{k=1, k\neq \ell}^{L}$, we set the optimization problem as
\begin{align}
	\max_{\phi_\ell} \enspace \det \left( \bI_N + \lambda \left( \bH_{-\ell}+e^{j\phi_\ell} \bR_{\ell}\right)\left( \bH_{-\ell}+e^{j\phi_\ell} \bR_{\ell}\right)^\mathrm{H} \right), \label{IRS phase optimization problem}
\end{align}
where $\lambda=P_\mathrm{DL}/(r N_0)$, and $\bH_{-\ell}$$=\bH_\mathrm{UB}+\sum_{k=1,k\neq \ell}^{L}e^{j\phi_k} \bR_{k}$, which gives $\bH_\mathrm{tot}=\bH_{-\ell}+e^{j\phi_\ell} \bR_{\ell}$. By substituting $\bR_{\ell}$$=\bh_{\mathrm{IB},\ell}\bh_{\mathrm{UI},\ell}^\mathrm{H}$, the objective function in (\ref{IRS phase optimization problem}) can be reformulated~as
\begin{align}
	&\det \left( \bI_N + \lambda \left( \bH_{-\ell}+e^{j\phi_\ell} \bR_{\ell}\right)\left( \bH_{-\ell}+e^{j\phi_\ell} \bR_{\ell}\right)^\mathrm{H} \right) \notag \\
	=&\det \bigg(\bI_N+\lambda\bigg( \bH_{-\ell}\bH_{-\ell}^\mathrm{H}+e^{j\phi_{\ell}}\bh_{\mathrm{IB},\ell}\Big(\bH_{-\ell}\bh_{\mathrm{UI},\ell}\Big)^\mathrm{H} \notag \\           &+e^{-j\phi_\ell}\Big(\bH_{-\ell}\bh_{\mathrm{UI},\ell}\Big)\bh_{\mathrm{IB},\ell}^\mathrm{H}+\bh_{\mathrm{IB},\ell}\bh_{\mathrm{UI},\ell}^\mathrm{H}\Big(\bh_{\mathrm{IB},\ell}\bh_{\mathrm{UI},\ell}^\mathrm{H}\Big)^\mathrm{H}\bigg)\bigg). \label{phi full expansion}
\end{align}
For simplicity, let us define the following variables:
\begin{align}
	\kappa_\ell=&e^{j\phi_\ell}\lambda, \label{kappa definition} \\
	\bp_\ell=&\bh_{\mathrm{IB},\ell}, \\
	\bq_\ell=&\bH_{-\ell}\bh_{\mathrm{UI},\ell}, \\
	\bA_\ell=&\bI_N+\lambda \bigg(\bH_{-\ell}\bH_{-\ell}^\mathrm{H}+\bh_{\mathrm{IB},\ell}\bh_{\mathrm{UI},\ell}^\mathrm{H}\Big(\bh_{\mathrm{IB},\ell}\bh_{\mathrm{UI},\ell}^\mathrm{H}\Big)^\mathrm{H}\bigg) \label{A_ell definition}.
\end{align}
By using these variables, (\ref{phi full expansion}) can be represented as
\endgroup
\begin{align}
	&\det \Big( \bA_\ell +\kappa_\ell \bp_\ell \bq_\ell^\mathrm{H} + \kappa_\ell^* \bq_\ell \bp_\ell^\mathrm{H} \Big) \notag \\
	=&\det \left( \bA_\ell + \left[\bp_\ell, \bq_\ell \right] \diag\left(\left[\kappa_\ell,\kappa_\ell^*\right]^\mathrm{T}\right)\left[\bq_\ell,\bp_\ell\right]^\mathrm{H} \right) \notag\\
	\stackrel{(\mathrm{a})}{=}&\det \left(\diag\left(\left[\frac{1}{\kappa_\ell},\frac{1}{\kappa_\ell^*}\right]^\mathrm{T}\right) +\left[\bq_\ell,\bp_\ell\right]^\mathrm{H} \bA_\ell^{-1} \left[\bp_\ell, \bq_\ell \right]\right) 
	%%%%%%%%%%%%%%%%%%%%% two columns %%%%%%%%%%%%%%%%%%%%%%%%%%%
	\notag \\	& \times
	%%%%%%%%%%%%%%%%%%%%%% one column %%%%%%%%%%%%%%%%%%%%%%%%%%%
	%%%%%%%%%%%%%%%%%%%%%%%%%%%%%%%%%%%%%%%%%%%%%%%%%%%%%%%%%%%%%
	\det \left(\diag\left([\kappa_\ell,\kappa_\ell^*]^\mathrm{T}\right)\right)\det(\bA_\ell), \label{after Sylvester's det thm}
\end{align}
where (a) can be derived using the Sylvester's determinant theorem \cite{matrix_book_2}.
%\begin{lemma} \label{lemma-determinant}
%	For any invertible square matrices $\bA$ and $\bC$, and for any matrices $\bB$, and $\bD$, $\det \left(\bA+\bB\bC\bD\right)=\det\left(\bC^{-1}+\bD\bA^{-1}\bB\right)\det(\bC)\det(\bA)$.
%\end{lemma}
%\begin{proof}
%	\begin{align}
%	\begin{bmatrix}
%	\bA & -\bB \\
%	\bD & \bC^{-1}
%	\end{bmatrix}
%	&=
%	\begin{bmatrix}
%	\bA+\bB\bC\bD & \bO \\
%	\bC\bD & \bI
%	\end{bmatrix}
%	\begin{bmatrix}
%	\bI & -\bB \\
%	\bO & \bC^{-1}
%	\end{bmatrix}, \notag \\
%	&=
%	\begin{bmatrix}
%	\bI & -\bA^{-1}\bB \\
%	\bO & \bC^{-1}+\bD\bA^{-1}\bB
%	\end{bmatrix}
%	\begin{bmatrix}
%	\bA & -\bO \\
%	\bD & \bI
%	\end{bmatrix},
%	\end{align}
%	where $\bO$ and $\bI$ denote the all-zero and identity matrix with proper dimension. Applying the same property as in (a), the above equation implies that
%	\begin{align}
%	\det(\bA+\bB\bC\bD)\det\left(\bC^{-1}\right)=\det\left(\bC^{-1}+\bD\bA^{-1}\bB\right)\det(\bA),
%	\end{align}
%	which finishes the proof.
%\end{proof}
The existence of $\bA_\ell^{-1}$ in (\ref{after Sylvester's det thm}) can be proven by the following lemma using the structure of $\bA_\ell$ in (\ref{A_ell definition}).
\begin{lemma} \label{lemma-inverse of A}
	For any positive definite matrix $\bA$ and a matrix~$\bB$ with a proper dimension, $\bA+\bB\bB^\mathrm{H}$ is an invertible matrix.
\end{lemma}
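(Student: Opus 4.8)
The plan is to show that $\bA + \bB\bB^\mathrm{H}$ is positive definite, which immediately implies it is invertible. The key observation is that positive definiteness is preserved under addition of a positive semidefinite matrix, and $\bB\bB^\mathrm{H}$ is always positive semidefinite for any (conformable) matrix $\bB$.

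Concretely, I would argue as follows. First, recall that $\bB\bB^\mathrm{H}$ is Hermitian and positive semidefinite: for any vector $\bx$ of the appropriate dimension, $\bx^\mathrm{H}\bB\bB^\mathrm{H}\bx = \Vert\bB^\mathrm{H}\bx\Vert^2 \geq 0$. Next, since $\bA$ is positive definite by hypothesis, $\bx^\mathrm{H}\bA\bx > 0$ for every nonzero $\bx$. Adding these, for every nonzero $\bx$ we get $\bx^\mathrm{H}(\bA + \bB\bB^\mathrm{H})\bx = \bx^\mathrm{H}\bA\bx + \Vert\bB^\mathrm{H}\bx\Vert^2 > 0$, so $\bA + \bB\bB^\mathrm{H}$ is positive definite. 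Finally, any positive definite matrix is invertible (its eigenvalues are all strictly positive, hence nonzero, so its determinant is nonzero), which gives the claim. To connect this back to the application, one notes that $\bA_\ell$ in \eqref{A_ell definition} has exactly this form with $\bA = \bI_N + \lambda\bH_{-\ell}\bH_{-\ell}^\mathrm{H}$ (which is positive definite since $\bI_N$ is positive definite and $\lambda\bH_{-\ell}\bH_{-\ell}^\mathrm{H}$ is positive semidefinite) and $\bB = \sqrt{\lambda}\,\bh_{\mathrm{IB},\ell}\bh_{\mathrm{UI},\ell}^\mathrm{H}$.

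Honestly, there is no real obstacle here; the statement is essentially a textbook fact about the cone of positive definite matrices being closed under addition of positive semidefinite matrices. The only thing to be careful about is making explicit that "proper dimension" means $\bB$ has as many rows as $\bA$ so that $\bA + \bB\bB^\mathrm{H}$ is well defined, and that the quadratic-form characterization of positive definiteness is being used in both directions (to verify the sum is positive definite, and to conclude invertibility).
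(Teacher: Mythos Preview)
Your proposal is correct and follows essentially the same approach as the paper: take an arbitrary nonzero $\bx$, expand $\bx^\mathrm{H}(\bA+\bB\bB^\mathrm{H})\bx = \bx^\mathrm{H}\bA\bx + \Vert\bB^\mathrm{H}\bx\Vert^2 > 0$, conclude positive definiteness, and hence invertibility. The only addition you make beyond the paper's own proof is the explicit remark connecting the lemma back to the structure of $\bA_\ell$ in \eqref{A_ell definition}, which is a helpful but nonessential elaboration.
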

\begin{proof} 
	Suppose that $\bx$ is any non-zero vector. Then, we have
	\begin{align}
		\bx^\mathrm{H}\left(\bA+\bB\bB^\mathrm{H}\right)\bx
		&=\bx^\mathrm{H}\bA\bx+\bx^\mathrm{H}\bB\bB^\mathrm{H}\bx
		%%%%%%%%%%%%%%%%%%%%% two columns %%%%%%%%%%%%%%%%%%%%%%%%%%%
		\notag \\&
		%%%%%%%%%%%%%%%%%%%%%% one column %%%%%%%%%%%%%%%%%%%%%%%%%%%
		%%%%%%%%%%%%%%%%%%%%%%%%%%%%%%%%%%%%%%%%%%%%%%%%%%%%%%%%%%%%%
		=\bx^\mathrm{H}\bA\bx+\left\Vert\bB^\mathrm{H}\bx\right\Vert^2
		%%%%%%%%%%%%%%%%%%%%% two columns %%%%%%%%%%%%%%%%%%%%%%%%%%%
		\notag \\&
		%%%%%%%%%%%%%%%%%%%%%% one column %%%%%%%%%%%%%%%%%%%%%%%%%%%
		%%%%%%%%%%%%%%%%%%%%%%%%%%%%%%%%%%%%%%%%%%%%%%%%%%%%%%%%%%%%%
		>0, \label{lemma 1}
	\end{align}
	where the inequality in (\ref{lemma 1}) implies $\bA+\bB\bB^\mathrm{H}$ is also a positive definite matrix. Since a positive definite matrix is invertible, $\bA+\bB\bB^\mathrm{H}$ is an invertible matrix, which finishes the proof.
\end{proof}

%\begin{proof}
%	\begin{align}
%		\det(\bA+\bB\bB^\mathrm{H})
%		&\stackrel{(\mathrm{e})}{=}\det(\bA)\det(\bI+\bA^{-1}\bB\bB^\mathrm{H})
%		%%%%%%%%%%%%%%%%%%%%% two columns %%%%%%%%%%%%%%%%%%%%%%%%%%%
%		, \notag \\&
%		%%%%%%%%%%%%%%%%%%%%%% one column %%%%%%%%%%%%%%%%%%%%%%%%%%%
%		%%%%%%%%%%%%%%%%%%%%%%%%%%%%%%%%%%%%%%%%%%%%%%%%%%%%%%%%%%%%%
%		\stackrel{(\mathrm{f})}{=}\det(\bA)\det(\bI+\bB^\mathrm{H}\bA^{-1}\bB),
%	\end{align}
%	where $\bI$ denotes the identity matrix with proper dimension, and (e) holds due to the same property of determinant as (b) in (\ref{spectral efficiecny w/o beamformer}) and the fact that a positive definite matrix is invertible. The equation (f) can be simply proved using the Sylvester's determinant theorem \cite{matrix_book_2}. Since the positive definite matrix $\bA$ let its inverse $\bA^{-1}$ be a positive definite matrix, $\bI+\bB^\mathrm{H}\bA^{-1}\bB$ also becomes a positive definite matrix. As the determinant of a positive definite matrix is a positive number, i.e., $\det(\bA)> 0$ and $\det\left(\bI+\bB^\mathrm{H}\bA^{-1}\bB\right)>$$0$, it can be proved that $\det(\bA+\bB\bB^\mathrm{H})> 0$. Then, the non-zero determinant of $\bA+\bB\bB^\mathrm{H}$ implies it is an invertible matrix.
%\end{proof}
In (\ref{after Sylvester's det thm}), since $\det \left(\diag\left([\kappa_\ell,\kappa_\ell^*]^\mathrm{T}\right)\right)$$=\vert e^{j\phi_{\ell}}\lambda\vert^2$ and $\det(\bA_\ell)$ are constants and independent of $\phi_\ell$, the optimization problem in (\ref{IRS phase optimization problem}) can be represented as
\begin{align}
	\max_{\phi_\ell}\enspace \det \left(
	\begin{bmatrix}
		\frac{e^{-j\phi_{\ell}}}{\lambda} & 0 \\
		0 & \frac{e^{j\phi_{\ell}}}{\lambda}
	\end{bmatrix}
	+\left[\bq_\ell,\bp_\ell\right]^\mathrm{H} \bA_\ell^{-1} \left[\bp_\ell, \bq_\ell \right]\right),
\end{align} 
and the optimal phase shift $\phi_\ell^\star$ can be obtained as
\begin{align}
	%%%%%%%%%%%%%%%%%%%%% two columns %%%%%%%%%%%%%%%%%%%%%%%%%%%
	&\phi_\ell^\star \notag \\
	%%%%%%%%%%%%%%%%%%%%%% one column %%%%%%%%%%%%%%%%%%%%%%%%%%%
%	\phi_\ell^\star 
	%%%%%%%%%%%%%%%%%%%%%%%%%%%%%%%%%%%%%%%%%%%%%%%%%%%%%%%%%%%%%
	=&\argmax_{\phi_\ell} \det \left(
	\begin{bmatrix}
		\frac{e^{-j\phi_{\ell}}}{\lambda}+\bq_\ell^\mathrm{H} \bA_\ell^{-1}\bp_\ell & \bq_\ell^\mathrm{H} \bA_\ell^{-1}\bq_\ell \\
		\bp_\ell^\mathrm{H} \bA_\ell^{-1}\bp_\ell & \frac{e^{j\phi_{\ell}}}{\lambda}+\bp_\ell^\mathrm{H} \bA_\ell^{-1}\bq_\ell
	\end{bmatrix}
	\right)\notag\\
	%&=\argmax_{\phi_\ell} \frac{1}{\vert \kappa_\ell \vert^2}+ \bq_\ell^\mathrm{H} \bA_\ell^{-1}\bp_\ell\bp_\ell^\mathrm{H}\bA_\ell^{-1}\bq_\ell \notag \\
	%&\quad\enspace-\bq_\ell^\mathrm{H} \bA_\ell^{-1}\bq_\ell\bp_\ell^\mathrm{H}\bA_\ell^{-1}\bp_\ell+2\mathrm{Re}\left(\frac{1}{\kappa_\ell}\bp_\ell^\mathrm{H}\bA_\ell^{-1}\bq_\ell\right), \\
	\stackrel{(\mathrm{a})}{=}&\argmax_{\phi_\ell} \mathrm{Re}\left(\frac{e^{-j\phi_{\ell}}}{\lambda}\bp_\ell^\mathrm{H}\bA_\ell^{-1}\bq_\ell\right) \notag \\
	=&\angle \left( \bp_\ell^\mathrm{H} \bA_\ell^{-1} \bq_\ell \right) \notag \\
	=&\angle \Bigg( \bh_{\mathrm{IB},\ell}^\mathrm{H} \bigg\{ \bI_N+\lambda \bigg(\bH_{-\ell}\bH_{-\ell}^\mathrm{H}+\bh_{\mathrm{IB},\ell}\bh_{\mathrm{UI},\ell}^\mathrm{H}
	%%%%%%%%%%%%%%%%%%%%% two columns %%%%%%%%%%%%%%%%%%%%%%%%%%%
	\notag \\	& \times
	%%%%%%%%%%%%%%%%%%%%%% one column %%%%%%%%%%%%%%%%%%%%%%%%%%%
	%%%%%%%%%%%%%%%%%%%%%%%%%%%%%%%%%%%%%%%%%%%%%%%%%%%%%%%%%%%%%
	\Big(\bh_{\mathrm{IB},\ell}\bh_{\mathrm{UI},\ell}^\mathrm{H}\Big)^\mathrm{H}\bigg)\bigg\}^{-1} \bH_{-\ell}\bh_{\mathrm{UI},\ell}\Bigg), \label{optimal phi}
\end{align}
where (a) can be derived by straightforward linear operations. Although the optimal value $\phi_\ell^\star$ can be derived by~(\ref{optimal phi}), the solution requires the BS to know $\bh_{\mathrm{IB},\ell}$ and $\bh_{\mathrm{UI},\ell}$ to compute~$\phi_\ell^\star$. When the BS has the channel information in the form of the rank-one matrices $\bR_\ell$ instead of $\bH_{\mathrm{IB}}$ and $\bH_{\mathrm{UI}}$, the BS is able to get the optimal~$\phi_{\ell}^\star$~as
\begingroup
%\allowdisplaybreaks
\begin{align}
	\phi_\ell^\star 
	=&\angle \Bigg(\mathrm{Tr}\Bigg( \bh_{\mathrm{IB},\ell}^\mathrm{H} \bigg\{ \bI_N+\lambda \bigg(\bH_{-\ell}\bH_{-\ell}^\mathrm{H}+\bh_{\mathrm{IB},\ell}\bh_{\mathrm{UI},\ell}^\mathrm{H}		
	%%%%%%%%%%%%%%%%%%%%% two columns %%%%%%%%%%%%%%%%%%%%%%%%%%%
	\notag \\	&	\times
	%%%%%%%%%%%%%%%%%%%%%% one column %%%%%%%%%%%%%%%%%%%%%%%%%%%
	%%%%%%%%%%%%%%%%%%%%%%%%%%%%%%%%%%%%%%%%%%%%%%%%%%%%%%%%%%%%%
	\Big(\bh_{\mathrm{IB},\ell}\bh_{\mathrm{UI},\ell}^\mathrm{H}\Big)^\mathrm{H}\bigg)\bigg\}^{-1} \bH_{-\ell}\bh_{\mathrm{UI},\ell} \Bigg)\Bigg)\notag \\
	\stackrel{(\mathrm{a})}{=}&\angle \Bigg(\mathrm{Tr}\Bigg( \bh_{\mathrm{UI},\ell}\bh_{\mathrm{IB},\ell}^\mathrm{H} \bigg\{ \bI_N+\lambda \bigg(\bH_{-\ell}\bH_{-\ell}^\mathrm{H}+\bh_{\mathrm{IB},\ell}\bh_{\mathrm{UI},\ell}^\mathrm{H}
	%%%%%%%%%%%%%%%%%%%%% two columns %%%%%%%%%%%%%%%%%%%%%%%%%%%
	\notag \\	& \times
	%%%%%%%%%%%%%%%%%%%%%% one column %%%%%%%%%%%%%%%%%%%%%%%%%%%
	%%%%%%%%%%%%%%%%%%%%%%%%%%%%%%%%%%%%%%%%%%%%%%%%%%%%%%%%%%%%%
	\Big(\bh_{\mathrm{IB},\ell}\bh_{\mathrm{UI},\ell}^\mathrm{H}\Big)^\mathrm{H}\bigg)\bigg\}^{-1} \bH_{-\ell} \Bigg)\Bigg) \notag \\
	=&\angle \left( \mathrm{Tr}\left( \bR_{\ell}^\mathrm{H}\Big(\bI_N+\lambda \bH_{-\ell}\bH_{-\ell}^\mathrm{H}+\lambda\bR_{\ell}\bR_{\ell}^\mathrm{H}\Big)^{-1} \right)\right), \label{optimal phi final}
\end{align}
where the property $\mathrm{Tr}(\bA\bB)=\mathrm{Tr}(\bB\bA)$ for any matrix $\bA$ and~$\bB$ whose multiplication produces a square matrix is used in (a). 
\endgroup

%%%%%%%%%%%%%%%%%%%%%% one column %%%%%%%%%%%%%%%%%%%%%%%%%%%
%\alglanguage{pseudocode}
%\begin{algorithm}[t]
%	\caption{Proposed phase shift design at the IRS}
%	\begin{multicols}{2}
%		\textbf{Initialize}
%		\begin{algorithmic}[1]
%			\State Set $\epsilon >0$
%			\For {$\ell=1,\cdots,L$}
%			\State $\bH_{-\ell}=\begin{cases}
%				\bH_\mathrm{UB}, & \ell=1 \\
%				\bH_\mathrm{UB}+\sum_{k=1}^{\ell-1}e^{j\phi_k^\star}\bR_k, & \text{else}
%			\end{cases}$
%			\State Update $\phi_\ell^\star$ by (\ref{optimal phi final})
%			\State Set $\phi_{\mathrm{tmp},\ell}=\phi_\ell^\star$
%			\EndFor
%		\end{algorithmic}
%		\textbf{{ }}\\
%		\textbf{{ }}\\
%		\textbf{{ }}\\
%		\textbf{{ }}\\
%		\textbf{{ }}\\
%		\textbf{Iterative update}
%		\begin{algorithmic}[1]
%			\addtocounter{ALG@line}{+6}
%			\For{$i=1,\cdots,I$}
%			\For{$\ell=1,\cdots,L$}
%			\State $\bH_{-\ell}=\bH_\mathrm{UB}+\sum_{k=1,k\neq \ell}^{L}e^{j \phi_k^\star} \bR_k$
%			\State Update $\phi_\ell^\star$ by (\ref{optimal phi final})
%			\EndFor
%			\If {{$\sum_{\ell=1}^{L}\left\vert\phi_\ell^\star-\phi_{\mathrm{tmp},\ell}\right\vert < \epsilon$}}
%			\State Break
%			\Else
%			\State Set $\phi_{\mathrm{tmp},\ell}=\phi_\ell^\star$
%			\EndIf
%			\EndFor
%			\State Return $\phi_\ell^\star \enspace \forall \ell \in \{1,\cdots,L\}$
%		\end{algorithmic}
%	\end{multicols}
%\end{algorithm}
%%%%%%%%%%%%%%%%%%%%%%%%%%%%%%%%%%%%%%%%%%%%%%%%%%%%%%%%%%%%%

For given $\{\phi_k\}_{k=1, k\neq \ell}^{L}$, the optimal phase shift of the $\ell$-th IRS element $\phi_\ell^\star$ is given by (\ref{optimal phi final}). Then, we can derive the optimal values of all $L$ phase shifts in an iterative way. The proposed IRS phase shift design algorithm is summarized in Algorithm~1. Note that the optimality in (\ref{optimal phi final}) ensures that every update of $\phi_{\ell}^\star$ in Algorithm 1 improves the spectral efficiency until the algorithm stops. The algorithm can stop when the outer iteration index $i$ reaches its maximum value $I$ or when the sum of differences between the previous and updated phase shifts becomes less than a positive number $\epsilon$. With the designed phase shifts $\phi_{\ell}^\star$, the transmit beamformer is obtained as $\bW$ in (\ref{transmit beamformer}).

%After all the optimal phase shifts $\phi_{\ell}^\star$ are obtained, the transmit beamformer $\bW$ is derived as in (\ref{transmit beamformer}) with the obtained optimal phase shifts. 
%Since each phase shift is designed to maximize $R_\mathrm{DL}$ regardless of the specific value of $\bW$, the BS does not have to consider $\boldsymbol{\Phi}$ and $\bW$ simultaneously. 

%%%%%%%%%%%%%%%%%%%%% two columns %%%%%%%%%%%%%%%%%%%%%%%%%%%
\alglanguage{pseudocode}
\begin{algorithm}[t]
	\caption{Proposed phase shift design at the IRS}
	\textbf{Initialize}
	\begin{algorithmic}[1]
		\State Set $\epsilon >0$
		\For {$\ell=1,\cdots,L$}
		\State $\bH_{-\ell}=\begin{cases}
			\bH_\mathrm{UB}, & \ell=1 \\
			\bH_\mathrm{UB}+\sum_{k=1}^{\ell-1}e^{j\phi_k^\star}\bR_k, & \text{else}
		\end{cases}$
		\State Update $\phi_\ell^\star$ by (\ref{optimal phi final})
		\State Set $\phi_{\mathrm{tmp},\ell}=\phi_\ell^\star$
		\EndFor
	\end{algorithmic}
	\textbf{Iterative update}
	\begin{algorithmic}[1]
		\addtocounter{ALG@line}{+6}
		\For{$i=1,\cdots,I$}
		\For{$\ell=1,\cdots,L$}
		\State $\bH_{-\ell}=\bH_\mathrm{UB}+\sum_{k=1,k\neq \ell}^{L}e^{j \phi_k^\star} \bR_k$
		\State Update $\phi_\ell^\star$ by (\ref{optimal phi final})
		\EndFor
		\If {{$\sum_{\ell=1}^{L}\left\vert\phi_\ell^\star-\phi_{\mathrm{tmp},\ell}\right\vert < \epsilon$}}
		\State Break
		\Else
		\State Set $\phi_{\mathrm{tmp},\ell}=\phi_\ell^\star$
		\EndIf
		\EndFor
		\State Return $\phi_\ell^\star \enspace \forall \ell \in \{1,\cdots,L\}$
	\end{algorithmic}
\end{algorithm}
%%%%%%%%%%%%%%%%%%%%%%%%%%%%%%%%%%%%%%%%%%%%%%%%%%%%%%%%%%%%%
\begingroup\allowdisplaybreaks
With regard to the uplink data transmission, the phase shifts $\phi_{\ell}$ and uplink transmit beamformer $\bF$ can be similarly designed. The uplink spectral efficiency $R_\mathrm{UL}$ is given by
\begin{align}
	R_\mathrm{UL}= \log_2 \left(\det\left(\bI_r+\frac{P_\mathrm{UL}}{r N_0}\bF^\mathrm{H}\bH_\mathrm{tot}^\mathrm{H} \bH_\mathrm{tot} \bF \right)\right) \label{uplink spectral efficiency},
\end{align} 
where $\bF\in$~$\mathbb{C}^{M \times r}$ is given by $\bF=\bU_\mathrm{tot}(:,1:r)$, and $\bU_\mathrm{tot}$ is given in (\ref{svd}). Exploiting the same property used in (\ref{spectral efficiecny w/o beamformer}), we can represent $R_\mathrm{UL}$~as
\begin{align}
	R_\mathrm{UL}
	=&\log_2 \left(\det\left(\bI_M+\frac{P_\mathrm{UL}}{r N_0}\bU_\mathrm{tot}^\mathrm{H}\bH_\mathrm{tot}^\mathrm{H} \bH_\mathrm{tot} \bU_\mathrm{tot}\right)\right) \notag\\
	=&\log_2 \left(\det\left(\bI_M+\frac{P_\mathrm{UL}}{r N_0}\bH_\mathrm{tot}^\mathrm{H} \bH_\mathrm{tot} \right)\right) \notag \\
	\stackrel{(\mathrm{a})}{=}&\log_2 \left(\det\left(\bI_N+\frac{P_\mathrm{UL}}{r N_0}\bH_\mathrm{tot} \bH_\mathrm{tot}^\mathrm{H} \right)\right)  \notag\\
	=&\log_2\left( \det\left(\bI_N + \frac{P_\mathrm{UL}}{r N_0}\bV_\mathrm{tot}^\mathrm{H}\bH_\mathrm{tot}\bH_\mathrm{tot}^\mathrm{H}\bV_\mathrm{tot}\right)\right), \label{uplink spectral efficiency final}
\end{align}         
where (a) can be derived using the Sylvester's determinant theorem \cite{matrix_book_2}. For the same transmit power $P_\mathrm{UL}=P_\mathrm{DL}$,
the uplink spectral efficiency in (\ref{uplink spectral efficiency final}) becomes the same as the downlink spectral efficiency in (\ref{spectral efficiecny w/o beamformer}). This implies that the optimal phase shift in (\ref{optimal phi final}) also maximizes $R_\mathrm{UL}$, and the BS can use the same IRS phase shifts for both the uplink and downlink data transmissions. 

%\subsection{Overall procedure for data transmissions}
%For given $\{\phi_k\}_{k=1, k\neq \ell}^{L}$, the optimal phase shift of the $\ell$-th IRS element $\phi_\ell^\star$ is given by (\ref{optimal phi final}). Then, we can derive the optimal values of all $L$ phase shifts in an iterative way. {The proposed IRS phase shift design algorithm is summarized} in Algorithm~1. Note that the optimality in (\ref{optimal phi final}) ensures that every update of $\phi_{\ell}^\star$ in Algorithm 1 improves the spectral efficiency until the algorithm stops. The algorithm can stop when the outer iteration index $i$ reaches its maximum value $I$ or when the sum of each phase shift difference over iteration becomes less than a positive number $\epsilon$. With the designed phase shifts $\phi_{\ell}^\star$, the transmit beamformer is obtained as $\bW$ in (\ref{transmit beamformer}) or $\bF$ in (\ref{uplink spectral efficiency}).
\endgroup

% % % % % % % % % % % % % % %
% % % % % % % % % % % % % % %
% % % % % % % % % % % % % % %

\section{Numerical Results} \label{numerical results}

In this section, we investigate the proposed IRS phase shift design and compare the channel estimation performance of proposed SPAC and SEROM with those of existing estimation techniques.
Regarding the UPA structure, we consider $N=N_\mathrm{v}\times N_\mathrm{h}$ antennas for the BS, $M=M_\mathrm{v}\times M_\mathrm{h}$ antennas for the UE, and $L=L_\mathrm{v}\times L_\mathrm{h}$ elements for the IRS. \red{For the uplink and downlink training sequences, we exploit DFT matrices with proper sizes depending on channel estimation techniques.} The $B$-bit quantization of each IRS phase shift is realized by rounding off to the nearest quantized value in $\left\{ 0, \frac{2\pi}{2^B}, \cdots, \frac{(2^{B}-1)2\pi}{2^B}\right\}$. The Rician fading channel is established with $K_\mathrm{UI}=5$ dB, $K_\mathrm{IB} = 5$ dB, and $K_\mathrm{UB} = 3$ dB where $d_\mathrm{UI}$, $d_\mathrm{IB}$, and $d_\mathrm{UB}$ are uniformly distributed in $[5, 10]$, $[90, 100]$, and $[d_\mathrm{IB}-d_\mathrm{UI},d_\mathrm{IB}+d_\mathrm{UI}]$ in the meter scale. For each channel, the number of NLoS paths is set as $G_\mathrm{UI}=4$, $G_\mathrm{IB}=4$, and $G_\mathrm{UB}=7$. The path-loss exponents for the large scale fading are set as $\eta_\mathrm{UI} = 2.2$, $\eta_\mathrm{IB} = 2.5$, and $\eta_\mathrm{UB} = 4.5$, and the path-loss is $\mu_0=-30$ dB at the unit distance $d_0=1$ m. The noise variance is $N_0=-89$~dBm.
% AoA와 AoD의 분포가 없어

%For SEROM, the IRS training matrix $\boldsymbol{\Omega}$ in \eqref{Omega} is constructed by taking the $Q$ permutated rows out of $L$ columns of $\mathcal{M}\times \mathcal{M}$ DFT matrix in order to have a normalization $A\simeq1$ where $\mathcal{M}=\max\{Q,L\}$. 
The training sequence length of direct channel estimation is set as $\tau_{\mathrm{d}}=M$, and those of the rank-one channel estimations are set as $\tau_{\text{OBO}}=LM$, $\tau_{\text{Co-OBO}}=2L$, $\tau_{\mathrm{SPAC}}=(L_\mathrm{v}+L_\mathrm{h}-1)M$, and $\tau_{\mathrm{SEROM}}=QM$. With a configurable training sequence length, that of SEROM is simply set as $\tau_{\mathrm{SEROM}}=\tau_{\mathrm{SPAC}}$ by setting the parameter $Q=L_\mathrm{v}+L_\mathrm{h}-1$.

\beginbreak
\subsection{Investigation of the proposed IRS phase shift design}

We evaluate the spectral efficiency of proposed IRS phase shift design in Section~\ref{IRS Beamforming Techniques} and compare the result with that of the algorithm in \cite{IRS_overview_4}. The design purpose of algorithm in \cite{IRS_overview_4} is the maximization of spectral efficiency where the transmit beamformer and the IRS phase shifts are alternately updated until convergence. To solely compare the IRS element design performance, we operate the proposed IRS phase shift design and the algorithm in \cite{IRS_overview_4} with the perfect channel information, i.e., $\widehat{\bH}_\mathrm{UB}={\bH}_\mathrm{UB}$ and $\widehat{\bR}_\ell=\bR_\ell$. Without channel estimation, the spectral efficiency is computed as \eqref{spectral efficiency} in Section~\ref{IRS Beamforming Techniques}.
%{is computed as
%\begin{align}
%	&\log_2\Bigg(\det\Bigg(
%	\bI_r + \frac{P_{\mathrm{DL}}}{r N_0}{\bV}^\mathrm{H}\left(\bH_{\mathrm{UB}}+\bH_{\mathrm{IB}}\boldsymbol{\Phi}\bH_{\mathrm{UI}}\right)
%	%%%%%%%%%%%%%%%%% two columns %%%%%%%%%%%%%%%%%%%%%%%%%%%
%	\notag\\&\qquad\qquad\qquad
%	%%%%%%%%%%%%%%%%% one column %%%%%%%%%%%%%%%%%%%%%%%%%%%% 
%	%%%%%%%%%%%%%%%%%%%%%%%%%%%%%%%%%%%%%%%%%%%%%%%%%%%%%%%%% 
%	\times\left(\bH_{\mathrm{UB}}+\bH_{\mathrm{IB}}\boldsymbol{\Phi}\bH_{\mathrm{UI}}\right)^\mathrm{H}{\bV}
%	\Bigg)\Bigg) 
%	\notag\\
%	=&\log_2\Bigg(\det\Bigg(
%	\bI_r + \frac{P_{\mathrm{DL}}}{r N_0}{\bV}^\mathrm{H}\left(\bH_{\mathrm{UB}}+\sum_{\ell=1}^L e^{j\phi_\ell}\bR_{\ell}\right)
%	%%%%%%%%%%%%%%%%% two columns %%%%%%%%%%%%%%%%%%%%%%%%%%%
%	\notag\\&\qquad\qquad\qquad\times
%	%%%%%%%%%%%%%%%%% one column %%%%%%%%%%%%%%%%%%%%%%%%%%%% 
%	%%%%%%%%%%%%%%%%%%%%%%%%%%%%%%%%%%%%%%%%%%%%%%%%%%%%%%%%% 
%	\left(\bH_{\mathrm{UB}}+\sum_{\ell=1}^L e^{j\phi_\ell}\bR_{\ell}\right)^\mathrm{H}{\bV}
%	\Bigg)\Bigg),\label{Eq. spectral efficiency}
%\end{align}
%where $r$ is the rank of $({\bH}_{\mathrm{UB}}+\sum_{\ell=1}^L e^{j\phi_\ell}{\bR}_\ell)^\mathrm{H}$, and ${\bV}\in\mathbb{C}^{N\times r}$ is composed of the first $r$ columns of right singular matrix of $({\bH}_{\mathrm{UB}}+\sum_{\ell=1}^L e^{j\phi_\ell}{\bR}_\ell)^\mathrm{H}$, which correspond to the $r$ largest singular values.}
\endgroup
In Fig.~\ref{Fig. Full vs. exhaustive}, two-bit phase quantization $B=2$ is considered, and the maximum spectral efficiency, which is found by exhaustive search of all the possible quantized IRS phase shifts, is demonstrated for the reference. The spectral efficiencies of proposed phase shift design and algorithm in \cite{IRS_overview_4} are very close to the result of exhaustive search. 
%	meaning they provide?
This means that the two techniques provide proper IRS phase shifts to maximize the spectral efficiency.

\begin{figure}[t!]
	\centering
	\includegraphics[width=.45\textwidth]{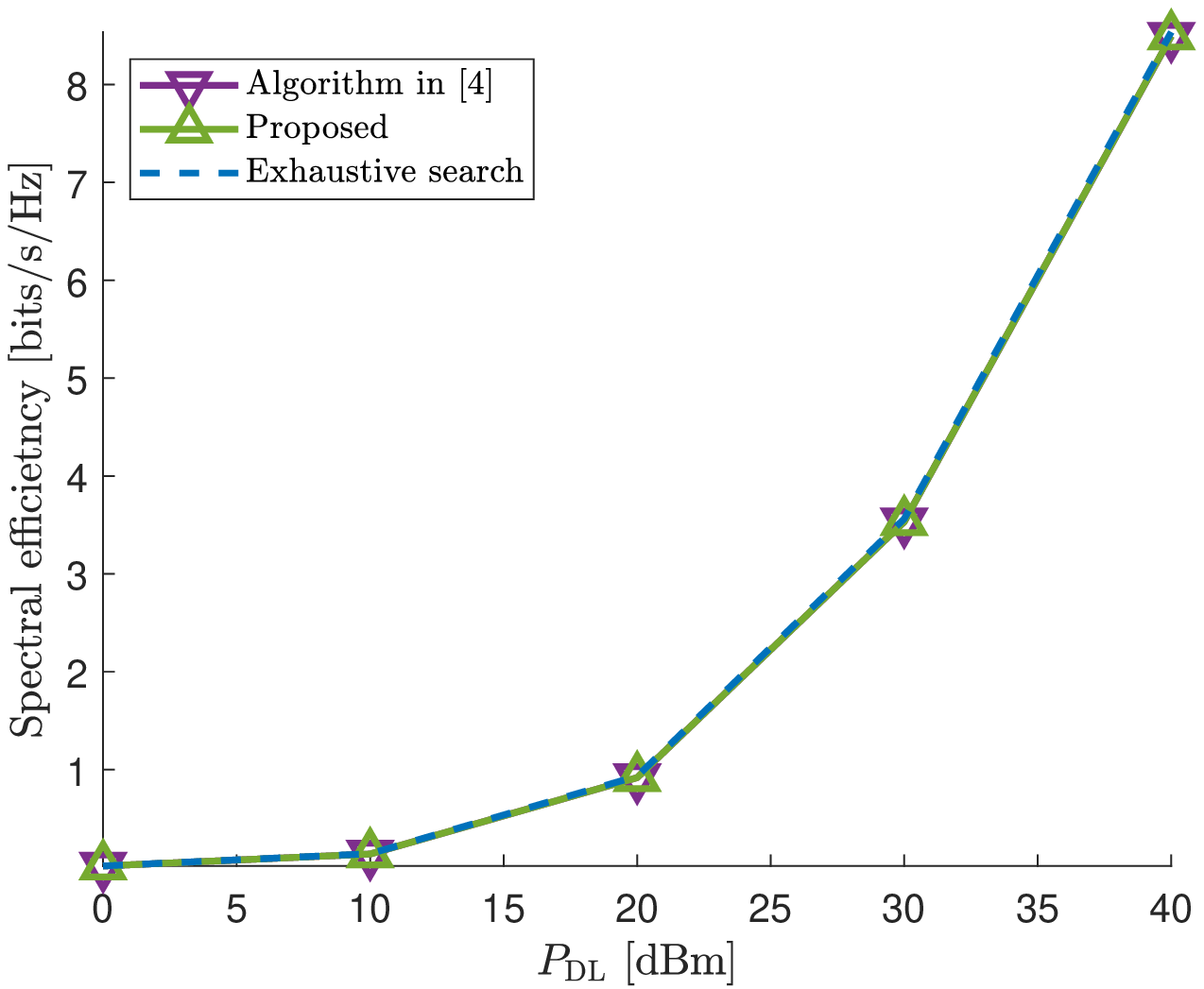} 
	\caption{Spectral efficiency with full channel information with $N_\mathrm{v}\times N_\mathrm{h}=2\times2$, $M_\mathrm{v}\times M_\mathrm{h}=1\times2$, $L_\mathrm{v}\times L_\mathrm{h}=3\times3$, and $B=2$.}
	\label{Fig. Full vs. exhaustive}
\end{figure}

%%%%%%%%%%%%%%%%% two columns %%%%%%%%%%%%%%%%%%%%%%%%%%%
\begin{table}[t]
	\renewcommand{\arraystretch}{2} % 2column=2
	\centering
	\caption{Computation complexity of IRS phase shift techniques with $M<N<L$.}
	\begin{tabular}{| c | c |}
		\hline
		\multirow{2}*{\shortstack{IRS phase\\ shift techniques}} & \multirow{2}*{Number of scalar multiplications} \\ &\\
		\hline
		\hline
		\multirow{1}*{\shortstack{Algorithm in \cite{IRS_overview_4}}} & \multirow{1}*{$\mathcal{P}\left(I_\mathrm{init}LMN+I_\mathrm{outer}L(4M^2N+3M^3)\right)$}\\
		\hline
		\multirow{2}*{\shortstack{Proposed phase\\ shift design}} & 
		\multirow{2}*{ $\mathcal{P}\left(IL(3M^2N+2M^3)\right)$} \\& \\
		\hline
		\multirow{1}*{\shortstack{Exhaustive  search}}&
		\multirow{1}*{ $\mathcal{P}\left(2^{LB}(LMN)\right)$} \\ 
		\hline
	\end{tabular}
	\label{Table. complexity}
\end{table}
%%%%%%%%%%%%%%%%%%%%%%%%%%%%%%%%%%%%%%%%%%%%%%%%%%%%%%%%% 

Note that Fig.~\ref{Fig. Full vs. exhaustive} only considers a small number of IRS elements due to the complexity of exhaustive search. While the proposed design and the algorithm in \cite{IRS_overview_4} both can serve a large number of IRS elements, there is difference on the computation complexity.
In Table~\ref{Table. complexity}, the computation complexity of three IRS phase shift techniques is listed by counting the number of scalar \red{multiplications, i.e., the} notation $\mathcal{P}(x)$ means that the number of scalar multiplications is proportional to $x$. The complexity of exhaustive search is remarkably higher than the other two techniques; it increases exponentially with the phase quantization bits and the number of IRS elements. The computation complexity of algorithm in \cite{IRS_overview_4} contains the parameters $I_\mathrm{init}$ and $I_\mathrm{outer}$ that are the numbers of initial random generations and outer algorithm iterations. 
The random initialization of algorithm in~\cite{IRS_overview_4} is to find good initial values that can reduce the number of outer algorithm iterations. On account of the interdependency of IRS phase shifts and transmit beamformer in the algorithm in~\cite{IRS_overview_4}, searching good initial values 
%	On account of the interdependency of IRS phase shifts and transmit beamformer in the algorithm in \cite{IRS_overview_4}, the random generation is required to search good initial values that
is not easy and results in additional complexity. The proposed IRS phase shift design can be designed independently from the specific value of transmit beamformer, and its simple initialization contributes to the low complexity. Consequently, for the same algorithm iteration $I=I_\mathrm{outer}$, the complexity of proposed phase shift design is lower than that of the algorithm in~\cite{IRS_overview_4} where the difference is $\mathcal{P}(I_\mathrm{init}LMN+I_\mathrm{outer}L(M^2N+M^3))$. The difference grows with the number of antennas, IRS elements, and algorithm iterations, and it becomes significant when a large number of antennas and IRS elements are deployed. Therefore, we use the proposed phase shift design, which gives the similar result to the exhaustive search but operates with the lowest complexity among the three, to compare the channel estimation techniques.

%%%%%%%%%%%%%%%%% one column %%%%%%%%%%%%%%%%%%%%%%%%%%%% 
%\begin{table}[t]
%	\renewcommand{\arraystretch}{1} % 2column=2
%	\centering
%	\caption{Computation complexity of IRS phase shift techniques with $M<N<L$.}
%	\begin{tabular}{| c | c |}
%		\hline
%		\multirow{1}*{\shortstack{IRS phase shift techniques}} & \multirow{1}*{Number of scalar multiplications} \\ 
%		\hline
%		\hline
%		\multirow{1}*{\shortstack{Algorithm in \cite{IRS_overview_4}}} & \multirow{1}*{$\mathcal{P}\left(I_\mathrm{init}LMN+I_\mathrm{outer}L(4M^2N+3M^3)\right)$}\\
%		\hline
%		\multirow{1}*{\shortstack{Proposed phase shift design}} & 
%		\multirow{1}*{ $\mathcal{P}\left(IL(3M^2N+3M^3)\right)$} \\ 
%		\hline
%		\multirow{1}*{\shortstack{Exhaustive  search}}&
%		\multirow{1}*{ $\mathcal{P}\left(2^{LB}(M^2N+M^3)\right)$} \\ 
%		\hline
%	\end{tabular}
%	\label{Table. complexity}
%\end{table}
%%%%%%%%%%%%%%%%%%%%%%%%%%%%%%%%%%%%%%%%%%%%%%%%%%%%%%%%% 

\subsection{Comparison of channel estimation techniques}

\beginbreak
In this subsection, we compare the channel estimation performance of proposed SPAC and SEROM with those of existing estimation techniques in~\cite{G.T.deAraujo:2020-LSKRF,Q.Nadeem:2020-MMSE-DFT,Z.Wang:2020-low_overhead_estimation}. The results of elementary techniques in Sections~\ref{Tech1} and \ref{Tech2} are also depicted as references. In \cite{G.T.deAraujo:2020-LSKRF}, the least squares Khatri-Rao factorization (LSKRF) is proposed to estimate $\bH_{\mathrm{IB}}$ and $\bH_{\mathrm{UI}}$. In \cite{Q.Nadeem:2020-MMSE-DFT}, $\bH_{\mathrm{IB}}$ is assumed as a known LoS channel, and $\bH_{\mathrm{UB}}$ and $\bH_{\mathrm{UI}}$ are assumed as Rayleigh fading channels. Based on these assumptions, the MMSE-DFT is proposed to estimate $\bH_{\mathrm{UB}}$ and $\bH_{\mathrm{UI}}$. Without considering any training overhead, the two estimation techniques in \cite{G.T.deAraujo:2020-LSKRF} and \cite{Q.Nadeem:2020-MMSE-DFT} require the training sequence lengths $\tau_{\mathrm{c}}=\tau_\mathrm{LSKRF}=\tau_{\mathrm{MMSE}\text{-}\mathrm{DFT}}=LM$ that are clearly longer than those of SPAC $\tau_{\mathrm{SPAC}}=(L_\mathrm{v}+L_\mathrm{h}-1)M$ and SEROM $\tau_{\mathrm{SEROM}}=QM$. In \cite{Z.Wang:2020-low_overhead_estimation}, the three-phase channel estimation is designed with the relatively short training sequence length $\tau_{\mathrm{three}\text{-}\mathrm{phase}}=M+L+\max\left\{M-1,\left\lceil{\frac{(M-1)L}{N}}\right\rceil\right\}$, which is comparable to those of SPAC and SEROM depending on the number of antennas and IRS elements. As a baseline, the result of all-zero IRS phase setting $\phi_\ell=0$ for all $\ell$ is provided where the channel estimation is conducted only for the resulting total channel $\bH_{\mathrm{UB}}+\bH_{\mathrm{IB}}\bI_L\bH_{\mathrm{UI}}$ with the training sequence length $\tau_{\text{all-zero}}=M$. 

To analyze the performance of channel estimation techniques, we adopt three performance metric: spectral efficiency per channel use, training sequence length, and effective spectral efficiency. The first metric measures the effectiveness of estimated channels to design IRS phase shifts, and the second metric assesses the training overhead of estimation technique. The third metric jointly evaluates the estimated channels and the training overhead of estimation techniques.

\begin{figure}[t]
	\centering 
	\subfloat[$N_\mathrm{v}\times N_\mathrm{h}=2\times4$, $M_\mathrm{v}\times M_\mathrm{h}=2\times2$, $L_\mathrm{v}\times L_\mathrm{h}=2\times4$, and $B=2$]{
		\includegraphics[width=.45\textwidth]{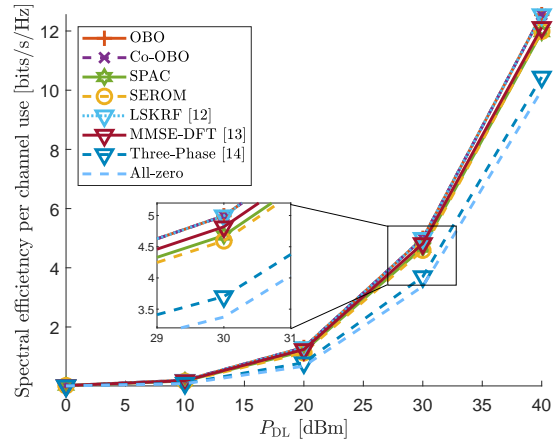}
	}
	%%%%%%%%%%%%%%%%%%%%% two columns %%%%%%%%%%%%%%%%%%%%%%%%%%%
	\\
	%%%%%%%%%%%%%%%%%%%%%% one column %%%%%%%%%%%%%%%%%%%%%%%%%%%
%	~
	%%%%%%%%%%%%%%%%%%%%%%%%%%%%%%%%%%%%%%%%%%%%%%%%%%%%%%%%%%%%%
	\subfloat[$N_\mathrm{v}\times N_\mathrm{h}=4\times8$, $M_\mathrm{v}\times M_\mathrm{h}=4\times4$, $L_\mathrm{v}\times L_\mathrm{h}=8\times16$, and $B=4$]{
		\includegraphics[width=.45\textwidth]{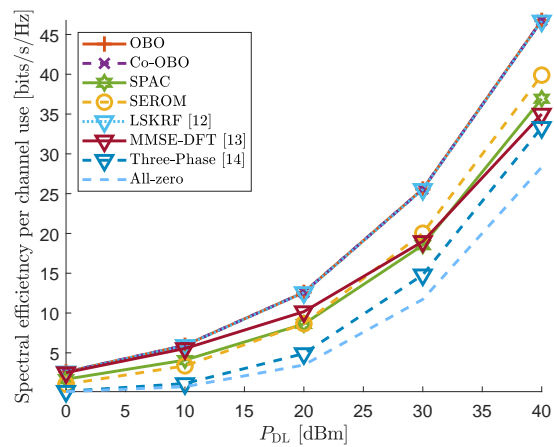}
	}
	\caption{Comparison of spectral efficiencies per channel \red{use}.}
	\label{Fig. spectral efficiency per channel use}
\end{figure}

\subsubsection{Spectral efficiency per channel use}
Based on estimated channels $\widehat{\bH}_{\mathrm{UB}}$ and $\widehat{\bR}_{\ell}$ for $\ell\in\{1,\cdots,L\}$, the spectral efficiency per channel use can be computed as~\eqref{spectral efficiency} by replacing $r$ and ${\bW}$ with $\hat{r}$ and $\widehat{\bW}$ where $\hat{r}$ is the rank of estimated channel $(\widehat{\bH}_{\mathrm{UB}}+\sum_{\ell=1}^L e^{j\phi_\ell}\widehat{\bR}_\ell)^\mathrm{H}$ and $\widehat{\bW}$ is composed of the first $\hat{r}$ right singular vectors of $(\widehat{\bH}_{\mathrm{UB}}+\sum_{\ell=1}^L e^{j\phi_\ell}\widehat{\bR}_\ell)^\mathrm{H}$ corresponding to the $\hat{r}$ dominant singular values. 
In Fig.~\ref{Fig. spectral efficiency per channel use}, the spectral efficiencies per channel use are depicted for two cases: one with a small number of antennas and IRS elements and the other with a large number of antennas and IRS elements. The Co-OBO estimation shows the highest spectral efficiency, but this is due to the ideal analog feedback, which is difficult to achieve in practice. With high training overhead, the OBO estimation and the LSKRF in~\cite{G.T.deAraujo:2020-LSKRF} also provide high spectral efficiencies. The MMSE-DFT in~\cite{Q.Nadeem:2020-MMSE-DFT} is another technique that requires high training overhead, but its spectral efficiency is lower than those of the OBO estimation and the LSKRF. This is because the MMSE-DFT is based on the assumption of Rayleigh fading, which deteriorate the estimation accuracy for the Rician fading with the LoS path. With low training overhead, SPAC and SEROM give moderate spectral efficiencies similar to the MMSE-DFT. The spectral efficiency of three-phase channel estimation in~\cite{Z.Wang:2020-low_overhead_estimation} is lower than other channel estimation techniques. The three-phase estimation requires arbitrary $M$ columns of $\bH_{\mathrm{IB}}$ to be linearly independent, but this condition is rarely satisfied without rich scattering environments. With ineffective IRS phase shifts, the all-zero phase setting provides the lowest spectral efficiency per channel use.
% 3-phase는 단순히 linearly independent 뿐만 아니라, noise를 고려하면 correlation이 모두 작을필요가 있다. 그래서 동작을 안하는거다.

%\begin{figure}
%\end{figure}
\pgfplotsset{% basis: example on p. 262 of the pgfplots manual
	/pgfplots/area legend/.style={
		/pgfplots/legend image code/.code={
			\fill[##1] (0cm,0.3em) rectangle (0.7em,-0.1em);
		}, },
}
\begin{figure}[t]
	\centering
	\subfloat[$N_\mathrm{v}\times N_\mathrm{h}=2\times4$, $M_\mathrm{v}\times M_\mathrm{h}=2\times2$, and $L_\mathrm{v}\times L_\mathrm{h}=2\times4$]{
		\begin{tikzpicture}
			\begin{axis} [
				reverse legend,
				legend image code/.code={
					\draw [#11] (0cm,0.1cm) rectangle (0.0cm,0.1cm);
				},
				legend style = {font=\fontsize{7}{5}\selectfont},
				legend pos = outer north east,
				legend cell align={left},
				width = .395\textwidth,
				height = 4.7cm,
				xbar = .01cm,
				bar width = 8pt,
				xmin = 0,
				xmax = 40,
				ytick = data,
				symbolic y coords = {},
				%				symbolic y coords = {All-zero, Three-phase, MMSE-DFT, LSKRF, SEROM, SPAC, Co-OBO, OBO},
				%				nodes near coords,	% 끝에 값 써주기기
				area legend,		% legend에 그래프 =가 아니라 -로 아이콘콘
				]
				\addplot coordinates {(4,)};
				\addplot coordinates {(15,)};
				\addplot coordinates {(36,)};
				\addplot coordinates {(36,)};
				\addplot coordinates {(24,)};
				\addplot coordinates {(24,)};
				\addplot coordinates {(20,)};
				\addplot coordinates {(36,)};
				
				\legend {All-zero,Three-phase\cite{Z.Wang:2020-low_overhead_estimation},MMSE-DFT\cite{Q.Nadeem:2020-MMSE-DFT},LSKRF\cite{G.T.deAraujo:2020-LSKRF},SEROM,SPAC,Co-OBO,OBO};
			\end{axis}
		\end{tikzpicture}
	}
	%%%%%%%%%%%%%%%%% two columns %%%%%%%%%%%%%%%%%%%%%%%%%%%
	\\
	%%%%%%%%%%%%%%%%% one column %%%%%%%%%%%%%%%%%%%%%%%%%%%% 
%	~
	%%%%%%%%%%%%%%%%%%%%%%%%%%%%%%%%%%%%%%%%%%%%%%%%%%%%%%%%% 
	\subfloat[$N_\mathrm{v}\times N_\mathrm{h}=4\times8$, $M_\mathrm{v}\times M_\mathrm{h}=4\times4$, and $L_\mathrm{v}\times L_\mathrm{h}=8\times16$]{
		\begin{tikzpicture}
			\begin{axis} [
				reverse legend,
				legend style = {font=\fontsize{7}{5}\selectfont},
				legend pos = outer north east,
				legend cell align={left},
				width = .395\textwidth,
				height = 4.7cm,
				xbar = .01cm,
				bar width = 8pt,
				xmin = 0,
				xmax = 2200,
				ytick = data,
				symbolic y coords = {},
				area legend,		% legend에 그래프 =가 아니라 -로 아이콘콘
				]
				\addplot coordinates {(16,)};
				\addplot coordinates {(204,)};
				\addplot coordinates {(2064,)};
				\addplot coordinates {(2064,)};
				\addplot coordinates {(384,)};
				\addplot coordinates {(384,)};
				\addplot coordinates {(288,)};
				\addplot coordinates {(2064,)};
				
				\legend {All-zero,Three-phase\cite{Z.Wang:2020-low_overhead_estimation},MMSE-DFT\cite{Q.Nadeem:2020-MMSE-DFT},LSKRF\cite{G.T.deAraujo:2020-LSKRF},SEROM,SPAC,Co-OBO,OBO};
			\end{axis}
		\end{tikzpicture}
	}
	\caption{Comparison of training sequence lengths $\tau_{\mathrm{tot}}=\tau_{\mathrm{d}}+\tau_{\mathrm{c}}$.}
	\label{Fig. training sequence length}
\end{figure}

\subsubsection{Training sequence length}
For the two cases in Fig.~\ref{Fig. spectral efficiency per channel use}, the training sequence lengths of channel estimation techniques are compared in Fig.~\ref{Fig. training sequence length}. The training sequence length is computed as $\tau_{\mathrm{tot}}=\tau_d+\tau_c$ where $\tau_{\mathrm{d}}$ is to estimate the direct channel $\bH_{\mathrm{UB}}$ and $\tau_{\mathrm{c}}$ is to estimate the $L$ rank-one matrices $\bR_{\ell}$. The trend of training sequence lengths matches with that of spectral efficiencies per channel use in general. The OBO estimation and the LSKRF provide high spectral efficiencies per channel use, and their training overhead is higher than that of other techniques. The three-phase estimation and the all-zero phase setting have short training sequence lengths and provide lower spectral efficiencies per channel use than other techniques. The performance of SPAC and SEROM is in middle, but their spectral efficiencies per channel use are close to those of the OBO estimation and the LSKRF, and their training sequence lengths are close to those of the three-phase estimation and the all-zero phase setting.

In Fig.~\ref{Fig. training sequence length}, it is shown that the overhead of training grows with the number of antennas and IRS elements. However, the coherence time block length of typical communication system is hard to be longer than 1,200 or 2,400 \cite{S.Noh:2020-Block_length,3GPP_Block_length}, and the training sequence length longer than 2,400 would not be acceptable. For the second case with the large numbers of antennas and IRS elements, the training sequence lengths of OBO estimation, LSKRF, and MMSE-DFT are already over 2,000, which means the three estimation techniques have only a little time for data transmissions after channel estimation. 
%To assess the effective spectral efficiency reflecting the impact of training overhead, the spectral efficiency per channel use is jointly evaluated with the training overhead in the following subsection.

\begin{figure}[t]
	\centering 
	\subfloat[$N_\mathrm{v}\times N_\mathrm{h}=2\times4$, $M_\mathrm{v}\times M_\mathrm{h}=2\times2$, $L_\mathrm{v}\times L_\mathrm{h}=2\times4$, $B=2$, and $\Gamma=150$]{
		\includegraphics[width=.45\textwidth]{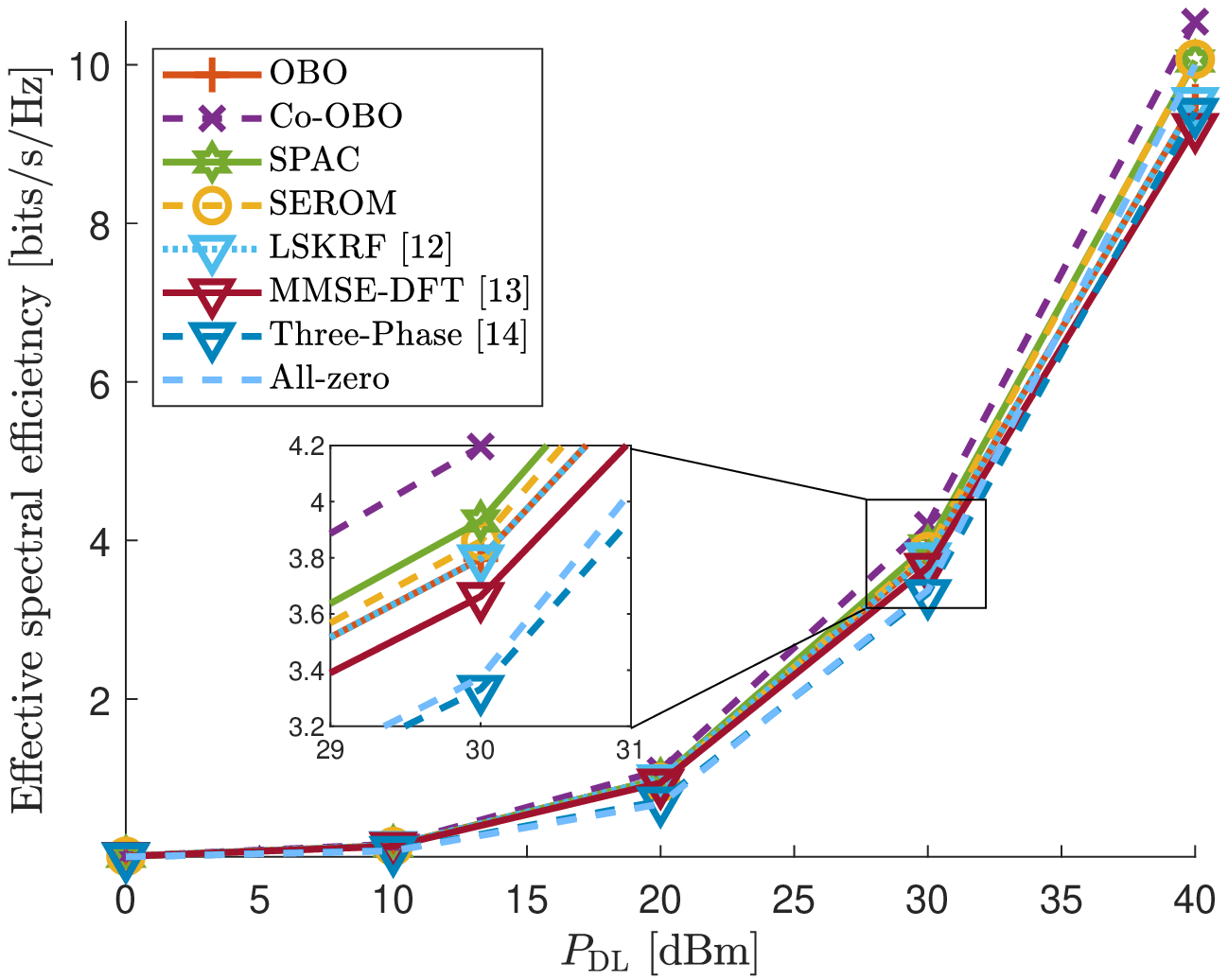}
		\label{Fig. spec. eff. small}
	}
	%%%%%%%%%%%%%%%%%%%%% two columns %%%%%%%%%%%%%%%%%%%%%%%%%%%
	\\
	%%%%%%%%%%%%%%%%%%%%%% one column %%%%%%%%%%%%%%%%%%%%%%%%%%%
%	~
	%%%%%%%%%%%%%%%%%%%%%%%%%%%%%%%%%%%%%%%%%%%%%%%%%%%%%%%%%%%%%
	\subfloat[$N_\mathrm{v}\times N_\mathrm{h}=4\times8$, $M_\mathrm{v}\times M_\mathrm{h}=4\times4$, $L_\mathrm{v}\times L_\mathrm{h}=8\times16$, $B=4$, and $\Gamma=2,400$]{
		\includegraphics[width=.45\textwidth]{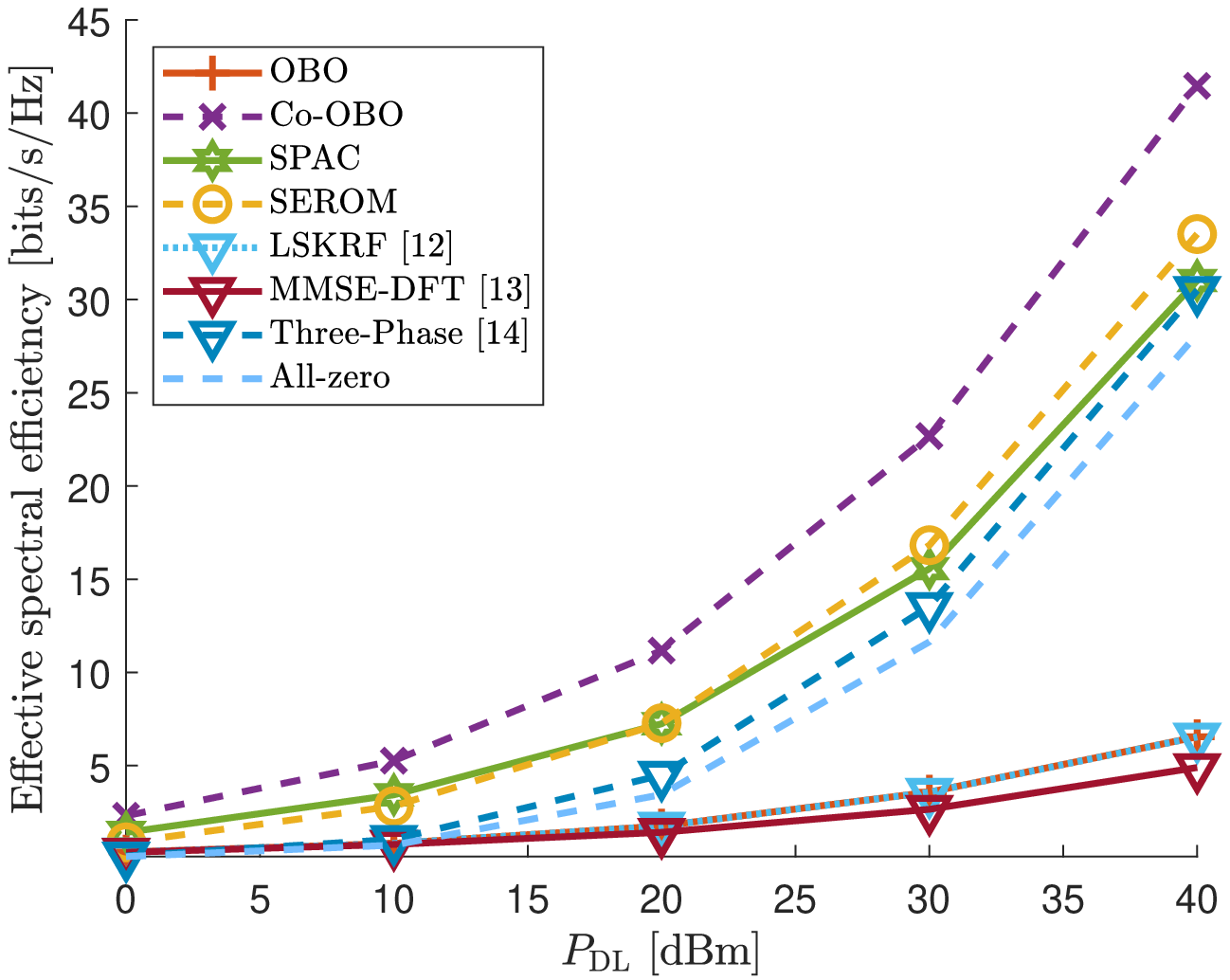}
		\label{Fig. spec. eff. large}
	}
	\caption{Comparison of effective spectral efficiencies.}
	\label{Fig. spec. eff.}
\end{figure}

\subsubsection{Effective spectral efficiency}

Now, we jointly assess the estimated channel and the training overhead by measuring the effective spectral efficiency as \cite{B.Hassibi:2003-training-base-spec-eff}
\begin{align}
	&\frac{\Gamma-\tau_{\mathrm{tot}}}{\Gamma}\log_2\Bigg(\det\Bigg(
	\bI_{\hat{r}} + \frac{P_{\mathrm{DL}}}{{\hat{r}} N_0}\widehat{\bW}^\mathrm{H}\left(\bH_{\mathrm{UB}}+\bH_{\mathrm{IB}}\boldsymbol{\Phi}\bH_{\mathrm{UI}}\right)
	%%%%%%%%%%%%%%%%% two columns %%%%%%%%%%%%%%%%%%%%%%%%%%%
	\notag\\&\times
	%%%%%%%%%%%%%%%%% one column %%%%%%%%%%%%%%%%%%%%%%%%%%%% 
	%%%%%%%%%%%%%%%%%%%%%%%%%%%%%%%%%%%%%%%%%%%%%%%%%%%%%%%%% 
	\left(\bH_{\mathrm{UB}}+\bH_{\mathrm{IB}}\boldsymbol{\Phi}\bH_{\mathrm{UI}}\right)^\mathrm{H}\widehat{\bW}
	\Bigg)\Bigg) 
	\notag\\
	=&\frac{\Gamma-\tau_{\mathrm{tot}}}{\Gamma}\log_2\Bigg(\det\Bigg(
	\bI_{\hat{r}} + \frac{P_{\mathrm{DL}}}{{\hat{r}} N_0}\widehat{\bW}^\mathrm{H}\left(\bH_{\mathrm{UB}}+\sum_{\ell=1}^L e^{j\phi_\ell}\bR_{\ell}\right)
	%%%%%%%%%%%%%%%%% two columns %%%%%%%%%%%%%%%%%%%%%%%%%%%
	\notag\\&\times
	%%%%%%%%%%%%%%%%% one column %%%%%%%%%%%%%%%%%%%%%%%%%%%% 
	%%%%%%%%%%%%%%%%%%%%%%%%%%%%%%%%%%%%%%%%%%%%%%%%%%%%%%%%% 
	\left(\bH_{\mathrm{UB}}+\sum_{\ell=1}^L e^{j\phi_\ell}\bR_{\ell}\right)^\mathrm{H}\widehat{\bW}
	\Bigg)\Bigg),\label{Eq. effective spectral efficiency}
\end{align}
\endgroup
where $\Gamma$ is the coherence time block length. In Fig.~\ref{Fig. spec. eff. small}, with its short training sequence length and high spectral efficiency per channel use, the Co-OBO estimation shows the highest spectral efficiency, which results from the ideal feedback. SPAC and SEROM also provide high spectral efficiencies with low training overhead. With the small number of IRS elements and low bits for quantization $B=2$, the IRS training matrix $\boldsymbol{\Omega}$ for SEROM in \eqref{Omega} is hard to satisfy the pseudo-orthogonality condition $|a_\ell|\gg|b_{\ell,k}|$, and this can degrade the estimation accuracy of SEROM. On the contrary, SPAC is not influenced by the phase quantization at all in channel estimation, and it gives a little higher spectral efficiency than SEROM. Since the OBO estimation, the LSKRF, and the MMSE-DFT consume high training overhead, their effective spectral efficiencies become lower than those of SPAC and SEROM. The all-zero phase setting and the three-phase estimation operate with short training sequence lengths, but this advantage barely compensates for their ineffective IRS phase shifts.

In Fig.~\ref{Fig. spec. eff. large}, the spectral efficiencies with large numbers of antennas and IRS elements are depicted. The Co-OBO estimation, SPAC, and SEROM still provide high spectral efficiencies as in Fig.~\ref{Fig. spec. eff. small}. With a large enough number of IRS elements and quantization bits, the IRS training matrix $\boldsymbol{\Omega}$ for SEROM can easily meet the pseudo-orthogonality condition $|a_\ell|\gg|b_{\ell,k}|$, and SEROM outperforms SPAC in this case. 
%The short training sequence length of three-phase estimation improved its spectral efficiency compared to other estimation techniques consuming long training sequences. 
With little time for data transmissions, the spectral efficiencies of OBO estimation, LSKRF, and MMSE-DFT are significantly reduced. By the same token, the spectral efficiency of three-phase estimation is relatively improved with its short training sequence length, compensating for poor channel estimation performance. The spectral efficiency of MMSE-DFT falls below all the other techniques as transmit power increases. This is because the large channel dimension deepens the gap between the Rician channel structure and the supposed channel structure of MMSE-DFT. At high transmit power, even the all-zero phase setting provides higher spectral efficiency than the techniques requiring high training overhead.
%The large number of antennas also emphasizes the overhead of training sequence length, and the spectral efficiency decreases as its cost. In the consideration of the training sequence length of SEROM and SPAC $\tau_{\mathrm{SEROM}}=\tau_{\mathrm{SPAC}}=368$, which are a lot shorter than those of the OBO estimation and the LSKRF $\tau_{\text{OBO}}=2,048$, $\tau_{\mathrm{LSKRF}}=4,096$, the spectral efficiency gain over all-zero IRS phase setting is high enough compared to the gain of OBO estimation and LSKRF.

\begin{figure}[t]
	\centering 
	\includegraphics[width=.45\textwidth]{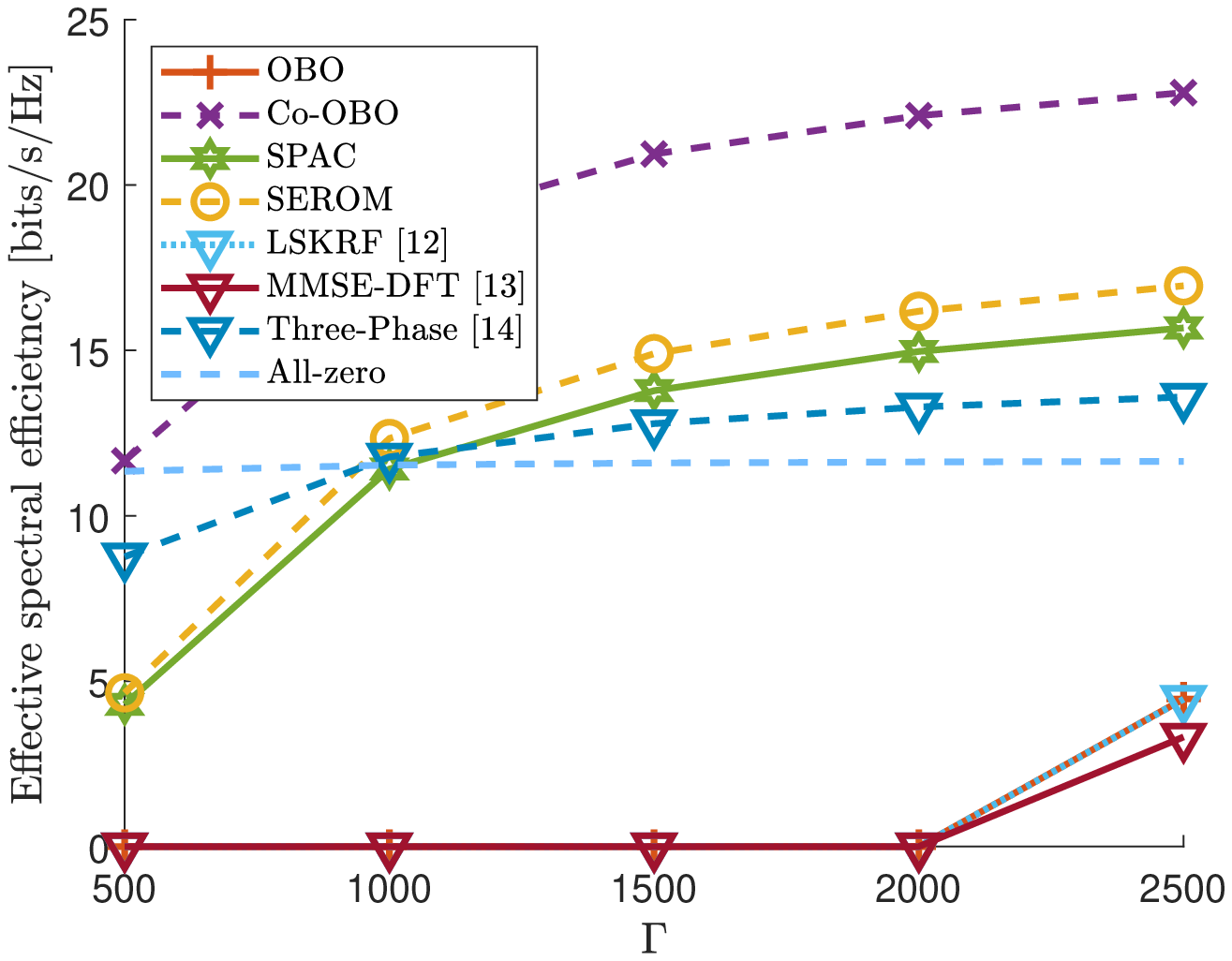} 
	\caption{Effective spectral efficiency vs. $\Gamma$ at $P_\mathrm{DL}=30$ dBm with $N_\mathrm{v}\times N_\mathrm{h}=4\times8$, $M_\mathrm{v}\times M_\mathrm{h}=4\times4$, $L_\mathrm{v}\times L_\mathrm{h}=8\times16$, and $B=4$.}
	\label{Fig. spec. eff. over Gamma}
\end{figure}

In Fig.~\ref{Fig. spec. eff. over Gamma}, the effective spectral efficiencies of channel estimation techniques are compared over coherence time block length $\Gamma$. With very small $\Gamma$, the three-phase estimation and the all-zero phase setting, which have short training sequence lengths, provide high spectral efficiencies. As $\Gamma$ grows, spectral efficiencies of SPAC and SEROM increase with long time for data transmissions. The spectral efficiencies of OBO estimation, LSKRF, and MMSE-DFT are zero due to their long training sequence lengths until $\Gamma=2,000$. Except the Co-OBO that is impractical due to the ideal feedback, SPAC and SEROM provide the highest effective spectral efficiencies for most practical range of $\Gamma$. This is by virtue of a fine balance between the training sequence length and spectral efficiency per channel use that each of SPAC and SEROM provides. A better balance also can be found by adjusting the training sequence length of SEROM.
\begingroup\allowdisplaybreaks

\endgroup

%In the previous subsection, it is verified that the high spectral efficiency can not be achieved with either short training sequence length or the high spectral efficiency per channel use alone. 
%The OBO estimation and the LSKRF provide high spectral {efficiencies} per channel use, but their long training sequence lengths largely reduced the time for data transmission resulting the low effective spectral {efficiencies}. On the contrary, the three-phase estimation and all-zero phase setting operate with very low training overhead, but the advantage hardly made up for their low spectral {efficiencies} per channel use. 
%	where the latter can be obtained from the proper setting of IRS elements. 
%SPAC and SEROM estimate the rank-one matrices in a moderate accuracy with short training sequence lengths, which were enough to design IRS phase shifts. 
%Each of SPAC and SEROM gives a fine balance between the training sequence length and spectral efficiency per channel use, and their effective spectral efficiencies are higher than any other techniques except the Co-OBO estimation. A better balance also can be found by adjusting the training sequence length of SEROM. Future works would develop joint framework of channel estimation and IRS element design to have low training overhead while extracting only a little information essential to design IRS elements and to figure out the resulting total channel.

\section{Conclusion} \label{conclusion}

We proposed two novel practical channel estimation techniques and an IRS phase shift design. The proposed SPAC and SEROM are designed to estimate channel information in SU-MIMO systems while consuming short training sequence lengths. The proposed IRS phase shift design is developed to maximize spectral efficiency while requiring only linear operations. Numerical results showed that the proposed phase shift design provides a spectral efficiency close to that of exhaustive search. When the proposed IRS phase shift design \red{was} utilized, the effective spectral efficiencies of SPAC and SEROM were higher than those of other estimation techniques. The results verified that the high spectral efficiency can be achieved by considering both the training overhead and the spectral efficiency per channel use. \red{A possible future work is to develop a joint} framework of channel estimation and IRS element design to have low training overhead while extracting only a necessary information to design IRS \red{elements, still achieving a high spectral efficiency.}
%\appendices
%\section{Proof of the First Zonklar Equation}
%Appendix one text goes here.
%
%% you can choose not to have a title for an appendix
%% if you want by leaving the argument blank
%\section{}
%Appendix two text goes here.

% use section* for acknowledgment
%\section*{Acknowledgment}
%
%
%The authors would like to thank...

% Can use something like this to put references on a page
% by themselves when using endfloat and the captionsoff option.

\bibliographystyle{IEEEtran}
%	\bibliography{refs_IRS_SU_MIMO}
%	\bibliography{refs_IRS_SU_MIMO_abb}
\bibliography{refs_IRS_SU_MIMO_abb_etal2}

\end{document}